\title{Towards Optimal Subsidy Bounds for Envy-freeable Allocations}
\author{
  Yasushi Kawase\thanks{The University of Tokyo, \texttt{kawase@mist.i.u-tokyo.ac.jp}}
  \and
  Kazuhisa Makino\thanks{Kyoto University, \texttt{makino@kurims.kyoto-u.ac.jp}}
  \and
  Hanna Sumita\thanks{Tokyo Institute of Technology, \texttt{sumita@c.titech.ac.jp}}
  \and
  Akihisa Tamura\thanks{Keio University, \texttt{aki-tamura@math.keio.ac.jp}}
  \and
  Makoto Yokoo\thanks{Kyushu University, \texttt{yokoo@inf.kyushu-u.ac.jp}}
}
\date{}
\newtheorem{theorem}{Theorem}
\newtheorem{example}{Example}
\newtheorem{lemma}{Lemma}
\newcommand{\id}{\mathtt{id}}
\newcommand{\bX}{\bm{X}}
\newcommand{\bA}{\bm{A}}
\newcommand{\bp}{\bm{p}}
\newcommand{\bq}{\bm{q}}
\DeclareMathOperator*{\argmax}{arg\,max}
\begin{document}

\maketitle

\begin{abstract}
We study the fair division of indivisible items with subsidies among $n$ agents, where the absolute marginal valuation of each item is at most one.
Under monotone valuations (where each item is a good), Brustle et al.~\cite{Brustle2020} demonstrated that a maximum subsidy of $2(n-1)$ and a total subsidy of $2(n-1)^2$ are sufficient to guarantee the existence of an envy-freeable allocation.
In this paper, we improve upon these bounds, even in a wider model.
Namely, we show that, given an EF1 allocation, we can compute in polynomial time an envy-free allocation with a subsidy of at most $n-1$ per agent and a total subsidy of at most $n(n-1)/2$.
Moreover, we present further improved bounds for monotone valuations.
\end{abstract}

\section{Introduction}
We consider the problem of \emph{fairly} dividing items
among agents. The notion of fairness that has been extensively
studied in the literature is \emph{envy-freeness}~\citep{Foley}. It
requires that no agent wants to swap her bundle with another agent's. When the items to be allocated are divisible, the classical result ensures the existence of an envy-free allocation~\citep{Varian}.
In contrast, when the items are indivisible, envy-freeness is not
a reasonable goal. For instance, consider $n$ agents with $n\ge 2$ and a single item valued at $1$ by each agent. Allocating the only item to an agent results in envy from the other agents, as they get nothing. Thus, envy-free allocations may not exist when the items are indivisible.

%%%   One direction: EF1
One way to circumvent this issue is to relax the fairness requirement.
For example, \emph{envy-freeness up to one item} (EF1) requires that
when agent $i$ envies agent $j$, the envy can be eliminated by either
(i) removing one item from agent $j$'s bundle, or 
(ii) removing one item from agent $i$'s bundle. 
It is known that an EF1 allocation is guaranteed to exist if each item is either a good %(whose marginal valuation is non-negative) 
or a chore %(whose marginal valuation is non-positive)
for any agent, i.e., doubly monotone valuations~\cite{LiptonMaMo04,bhaskar+2021}. 
Moreover, for general valuations, the existence of an EF1 allocation is assured when there are only $2$ agents~\cite{BBB+2020}.

%%%   Another direction: using money --- we follow this approach
Another way to circumvent this issue is monetary compensation (subsidy). 
Since money is divisible, it can be a powerful tool to achieve envy-freeness.
However, since the subsidy payments must be provided by
an external agent (e.g., a government or a funding agency),
it is desirable that the total subsidy amount is bounded. 
%the maximum amount each agent receives, as well as the total subsidy amount is bounded. 
Thus, in this paper, we study the fair division of indivisible items
with limited subsidies. 

%%% Most of existing works consider a special case
Most of the existing works on the fair division of indivisible items with limited subsidies
focus on some special cases.
For example, \citet{Maskin1987} and \citet{Klijn}
consider the case that 
the number of agents and the number of items are equal and
each agent can be allocated at most one item.
\citet{HalpernShah2019} consider an extended model where
the number of agents and number of items may differ,
and each agent can be allocated more than
one item, assuming the valuation of each agent is binary additive.
\citet{babaioff2020fair} and \citet{aamasversion}
consider the case that the valuation of each agent
is matroidal (which is not necessarily additive). 
\citet{Barman2023} examine a broader class of valuations in which the marginal valuation of each item is binary. 
As far as the authors are aware, the most general model considered 
so far is monotone valuations~\cite{Brustle2020}, 
where the marginal contribution of each item is non-negative. 

%%% We consider a very general case
In this paper, we study the fair division of indivisible items with limited subsidies when the valuations are not restricted to be monotone.
%in a more general setting where each agent has a doubly monotone valuation over bundles, i.e., each item can be a good or a chore.
We assume that the valuations are normalized so that the absolute marginal value of each item is at most $1$ (i.e., between $-1$ and $1$). 

%%% Existing work and open problem
For monotone valuations, which are special cases of our model, 
\citet{Brustle2020} show that envy-free allocation always exists with a subsidy
of amount at most $2(n-1)$ per agent, and the total amount is $2(n-1)^2$.
However, the only known lower bound on the total subsidy is $n-1$ (which can be obtained using the case
with $n$ agents and one item described at the beginning of this section), 
it remains an open question whether 
we can improve on the total subsidy bound of $2(n-1)^2$ for monotone valuations, 
as mentioned in a recent survey paper \cite[Open Question 9]{liu2023mixed}.

\subsection*{Our Results}
In this paper, we present improved upper bounds for the subsidies necessary to achieve envy-freeness.
We demonstrate that, given an EF1 allocation, an envy-free allocation with a subsidy can be constructed in polynomial time where each agent receives a subsidy of at most $n-1$ and $n(n-1)/2$ in total (\Cref{thm:EF1-upper}).
This result implies that, when valuation functions are doubly monotone or there are only two agents, such envy-free allocations with limited subsidies can be computed in polynomial time. Notably, this improves upon the necessary subsidy amount for the existing case of monotone valuations, as monotone valuations are also doubly monotone. 
Furthermore, when $n=2$, our obtained bounds are best possible since a subsidy of $1$ is indispensable. We also show that from an EF$k$ allocation (i.e., pairwise envy can be eliminated by removing at most $k$ items), we can construct an envy-free allocation with a subsidy of at most $k(n-1)$ per agent and a total subsidy of $k\cdot n(n-1)/2$.

It is worth mentioning that our upper bounds of $n-1$ per agent and $n(n-1)/2$ in total cannot be improved when considering an arbitrary EF1 allocation (\Cref{ex:n-1}).
We overcome this impossibility by making a slight modification of the bundles.
To be exact, for three or more agents with monotone valuations, we improve the bounds further to $n-1.5$ per agent and $(n^2-n-1)/2$ in total (\Cref{thm:improved}).

\subsection*{Related Work}
%Econ literature, SunYang, Maskin
The concept of compensating an indivisible resource allocation with money
has been prevalent in classical economics literature~\citep{Alkan1991,Maskin1987,Klijn,moulin2004fair,SunYang2003,Svensson1983,Tadenuma1993}.
Much of the classical work has focused on the unit-demand case in
which each agent is allocated at most one good. Examples include the
famous rent-division problem of assigning rooms to several housemates
and dividing the rent among them \citep{SuRentalHarmony}. It is known
that, for a sufficient amount of subsidies, an envy-free allocation
exists~\citep{Maskin1987} and can be computed in polynomial time~\citep{Aragones,Klijn}.

%Halpern Shah 
Most classical literature, however, has not considered a situation in which the number of items to be allocated exceeds the number of agents, in contrast to the rich body of recent literature on the multi-demand fair division problem. 
\citet{HalpernShah2019} recently extended the model to the multi-demand setting wherein multiple items can be allocated to one agent. 
Despite the existence of numerous related papers,
\citet{HalpernShah2019} is the first work to study the asymptotic bounds on the amount of subsidy required to achieve envy-freeness. 
They showed that an allocation is envy-freeable with a subsidy if and only if the agents cannot increase the utilitarian social welfare by permuting bundles. This characterization implies that an allocation that can be made envy-free with a subsidy needs to satisfy some efficiency condition; hence, an approximately fair allocation, such as an EF1 allocation~\citep{Budi11a}, may not be an envy-freeable allocation.
It was conjectured in~\cite{HalpernShah2019} that, for additive valuations in which the value of each item is at most $1$, giving at most $1$ to each agent is sufficient to eliminate envies. 
\citet{Brustle2020} affirmatively settled this conjecture
by designing an algorithm that iteratively solves a maximum-matching instance.

\citet{babaioff2020fair} and \citet{benabbou2020} studied the fair allocation of indivisible items with matroidal valuations. The prioritized egalitarian mechanism proposed by \citet{babaioff2020fair} returns an allocation that maximizes the Nash welfare and achieves envy-freeness up to any good and utilitarian optimality. 
%\citet{benabbou2020} focused more on the balance between efficiency and fairness. They demonstrated that when agents have matroidal valuations, leximin allocations are equivalent to MNW allocations. This finding, in conjunction with the result of \citet{babaioff2020fair}, implies that MNW allocations are Lorenz dominating. 
\citet{aamasversion} developed a strategy-proof, 
polynomial-time implementable mechanism called
\emph{subsidized egalitarian} mechanism, which requires a subsidy of the amount
at most $1$ per agent and $n-1$ in total. 
Furthermore, \citet{BarmanKrishna2022IJCAI} examined a more general model
with dichotomous marginals and obtained the same bounds. 

\citet{caragiannis2020computing} studied the computational complexity of approximating the minimum amount of subsidies required to achieve envy-freeness. \citet{Aziz2020} considered another fairness requirement, the so-called \emph{equitability}, in conjunction with envy-freeness and characterized an allocation that can be made both equitable and envy-free with a subsidy. \citet{narayan2021birds} studied a related but different setting with transfer payments; they analyzed the impact of introducing some amount of transfers on the Nash welfare and utilitarian welfare while achieving envy-freeness.

\section{Preliminaries}
%\subsection{Model}
We model fair division with a subsidy as follows.
For each natural number $k \in \mathbb{N}$, we denote $[k]=\{1,\ldots,k\}$.
Let $N=[n]$ be the set of $n$ agents and let $M=\{e_1,\dots,e_m\}$ be the set of $m$ indivisible items.
Each agent $i$ has a \emph{valuation function}, denoted as $v_i\colon 2^M \to \mathbb{R}$, where $\mathbb{R}$ represents the set of real numbers.
%$v_i(\emptyset)=0$
We assume that the functions $v_i$'s are given as value oracles. 
In addition, we assume that the maximum marginal contribution of each item is at most one, i.e., 
$|v_i(X\cup \{e\})-v_i(X)|\le 1$ for any $i\in N$, $e\in M$, and $X\subseteq M\setminus\{e\}$.

We define an item $e\in M$ as a \emph{good} for agent $i\in N$ if $v_i(X\cup\{e\})\ge v_i(X)$ for every $X\subseteq M\setminus\{e\}$.
Additionally, we define an item $e\in M$ as a \emph{chore} for agent $i\in N$ if $v_i(X\cup\{e\})\le v_i(X)$ for every $X\subseteq M\setminus\{e\}$, with at least one of these inequalities being strict.
An instance is said to be \emph{monotone} if every $e\in M$ is a good for any agent $i\in N$.
An instance is said to be \emph{doubly monotone} if every item $e\in M$ is either a good or a chore for any agent $i\in N$.

An \emph{allocation} is an ordered partition $\bA=(A_1,\dots,A_n)$ of $M$, i.e., $\bigcup_{i\in N}A_i=M$ and $A_i\cap A_j=\emptyset$ for any distinct $i,j\in N$.
In allocation $\bA$, each agent $i$ receives the items of bundle $A_i$.
A \emph{subsidy} is a non-negative real vector $\bp=(p_1,\dots,p_n)\in\mathbb{R}_+^N$, where $p_i$ is the amount of money given to agent $i\in N$.
In an allocation with a subsidy $(\bA,\bp)$, the utility of each agent $i$ is $v_i(A_i)+p_i$. 
An allocation with a subsidy $(\bA,\bp)$ is \emph{envy-free} if $v_i(A_i)+p_i\ge v_i(A_j)+p_j$ for any pair of agents $i,j\in N$.
Our goal is to find an envy-free allocation with a subsidy $(\bA,\bp)$ such that the total subsidy $\sum_{i\in N}p_i$ (or maximum subsidy $\max_{i\in N}p_i$) is minimized.

An allocation $\bA$ is called \emph{envy-free up to one item (EF1)} if, for all $i,j\in[n]$, it holds that $v_i(A_i\setminus X)\ge v_i(A_j\setminus X)$ for some $X\subseteq A_i\cup A_j$ with $|X|\le 1$.
Similarly, an allocation $\bA$ is called \emph{envy-free up to $k$ items (EF$k$)} if, for all $i,j\in[n]$, it holds that $v_i(A_i\setminus X)\ge v_i(A_j\setminus X)$ for some $X\subseteq A_i\cup A_j$ with $|X|\le k$.
It is known that an EF1 allocation always exists and can be found in polynomial time if the valuations are doubly monotone~\cite{LiptonMaMo04,bhaskar+2021} or $n=2$~\cite{BBB+2020}.

\subsection{Envy-free Structure of Subsidies}
An allocation $\bA$ is called \emph{envy-freeable} if there exists a subsidy vector $\bp$ such that $(\bA,\bp)$ is envy-free.
We call such a subsidy vector \emph{envy-eliminating}.
In this subsection, we describe the structure of envy-eliminating subsidies. 

We fix an allocation $\bA=(A_1,\dots, A_n)$.
Let $w\in\mathbb{R}^{[n]\times [n]}$ be a weight matrix such that $w_{i,j}=v_i(A_j)$ for each $i,j\in [n]$.
For a permutation $\sigma$ of $[n]$, let $\bA^\sigma=(A_{\sigma(1)},\dots,A_{\sigma(n)})$ and 
let $P^\sigma$ be the set of envy-eliminating subsidy vectors: 
\begin{align}
  P^\sigma=\left\{\bp\in\mathbb{R}_+^n\,\middle|\, 
  w_{i,\sigma(i)}+p_{\sigma(i)}\ge w_{i,j}+p_{j}~(\forall i,j\in[n])
  \right\}.
\end{align}
We call a permutation $\sigma$ \emph{maximum weight permutation} for $w$ if it maximizes $\sum_{i=1}^n w_{i,\sigma(i)}$ among permutations.
\citet{HalpernShah2019} proved that an allocation $\bA$ is envy-freeable if and only if $\sum_{i=1}^n v_i(A_i)\ge \sum_{i=1}^n v_i(A_{\sigma(i)})$ for any permutation $\sigma$ of $[n]$.
From this result, the polyhedron $P^\sigma$ is non-empty if $\sigma$ is a maximum weight permutation.
The polyhedron $P^\sigma$ contains a unique minimal element (i.e., a vector $\bp$ such that $\bp \leq \bp'$ for any $\bp' \in P^{\sigma}$) because it is lower bounded by non-negativity and $\bp,\bp'\in P^\sigma$ implies that $(\min\{p_i,p_i'\})_{i\in[n]}$ is also in $P^\sigma$.
The unique minimal vector is called the \emph{minimum subsidy vector} for $w$ with respect to $\sigma$.
Note that, for $\bp\in P^\sigma$, the subsidy $p_i$ is associated with the bundle $A_i$, not the agent $i$.

For the pair $(w,\sigma)$ of a weight $w$ and a permutation $\sigma$ of $[n]$, 
we define the \emph{envy graph} $G^{w,\sigma}=(V,E;\gamma)$ as a weighted directed complete graph in which each agent is a vertex (i.e., $V=[n]$), and each edge $(i,j)\in E~(=\{(i',j')\mid i',j'\in [n], i'\neq j'\})$ has a weight $\gamma_{i,j}=w_{i,\sigma(j)}-w_{i,\sigma(i)}$.
Note that $\gamma_{i,j}$ represents the envy from $i$ towards $j$ in allocation $\bA^\sigma$.
The minimum subsidy vector can be characterized by using this envy graph.
\begin{lemma}[{\citet[Theorem 2]{HalpernShah2019}}]\label{lem:longest-path}
For any maximum weight permutation $\sigma$, the minimum subsidy $p_{\sigma(i)}$ is the maximum length of any path in $G^{w,\sigma}$ starting at $i$.
\end{lemma}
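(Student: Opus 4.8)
The plan is to show that the vector $\bp$ with $p_{\sigma(i)}$ set equal to $\ell_i$, the maximum length of a path in $G^{w,\sigma}$ starting at $i$, is precisely the minimum element of $P^\sigma$. The first and most important step is a structural observation: since $\sigma$ is a maximum weight permutation, $G^{w,\sigma}$ contains no cycle of strictly positive weight. Indeed, given a directed cycle $i_0 \to i_1 \to \dots \to i_{k-1} \to i_0$, cyclically reassigning the bundles $A_{\sigma(i_0)}, \dots, A_{\sigma(i_{k-1})}$ among the agents $i_0, \dots, i_{k-1}$ (and leaving all other agents fixed) yields a valid permutation $\sigma'$ with $\sum_i w_{i,\sigma'(i)} - \sum_i w_{i,\sigma(i)} = \sum_{t} \gamma_{i_t, i_{t+1}}$, exactly the cycle weight; maximality of $\sigma$ forces this to be at most $0$. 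Consequently the longest-walk length from each $i$ is finite and coincides with $\ell_i$ (any cycle inserted into a walk can only weakly decrease its weight), and $\ell_i \ge 0$ because the trivial one-vertex path has length $0$.

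With $\ell_i$ well-defined, I would verify feasibility using the Bellman-type relation $\ell_i = \max\{0,\ \max_{j \ne i}(\gamma_{i,j} + \ell_j)\}$, which holds because prepending the edge $(i,j)$ to a longest walk from $j$ is again a walk from $i$, while the first edge of any nonempty optimal walk from $i$ identifies the maximizer. In particular $\ell_i \ge \gamma_{i,j} + \ell_j$ for every $j$. Substituting $\gamma_{i,j} = w_{i,\sigma(j)} - w_{i,\sigma(i)}$ together with $p_{\sigma(i)} = \ell_i$ and $p_{\sigma(j)} = \ell_j$, this rearranges exactly to $w_{i,\sigma(i)} + p_{\sigma(i)} \ge w_{i,\sigma(j)} + p_{\sigma(j)}$; as $j$ ranges over $[n]$ this is precisely the family of inequalities defining $P^\sigma$ (after relabeling the free index through the bijection $\sigma$), and combined with $\ell_i \ge 0$ it gives $\bp \in P^\sigma$.

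For minimality, I would take an arbitrary $\bq \in P^\sigma$ together with a longest path $i = i_0 \to i_1 \to \dots \to i_k$ realizing $\ell_i$. Feasibility of $\bq$ supplies $q_{\sigma(i_t)} - q_{\sigma(i_{t+1})} \ge \gamma_{i_t, i_{t+1}}$ along each edge, so telescoping and using $q_{\sigma(i_k)} \ge 0$ gives $q_{\sigma(i)} \ge \sum_{t=0}^{k-1}\gamma_{i_t,i_{t+1}} = \ell_i = p_{\sigma(i)}$. Hence $\bp \le \bq$ coordinatewise, so $\bp$ is the minimum subsidy vector and $p_{\sigma(i)} = \ell_i$, as claimed. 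The main obstacle is the no-positive-cycle step: the two telescoping arguments are routine once $\ell_i$ is guaranteed to be finite and to satisfy the optimality recursion, and it is exactly there that the maximum-weight hypothesis on $\sigma$ is indispensable—without it the polyhedron $P^\sigma$ could even be empty.
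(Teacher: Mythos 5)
Your proposal is correct and complete: the cyclic-reassignment argument ruling out positive-weight cycles, the Bellman-type feasibility check showing the longest-path vector lies in $P^\sigma$, and the telescoping minimality bound together constitute a valid proof, and this is essentially the same argument as in the cited source \citet[Theorem 2]{HalpernShah2019} (the paper itself gives no proof, deferring to that citation, and its remark following the lemma records exactly your no-positive-cycle observation). Nothing to fix.
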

It should be noted that the envy graph $G^{w,\sigma}$ does not contain any positive-weight directed cycle if $\sigma$ is a maximum weight permutation.
Hence, we only need to consider simple paths.
Although there may exist several maximum weight permutations, the following lemma states that the corresponding minimum subsidy vectors are identical.
\begin{lemma}\label{lem:bundle-subsidy}
Let $\sigma$ and $\sigma'$ be maximum weight permutations for $w$.
Also, let $\bp$ and $\bp'$ be the minimum subsidy vectors for $w$ with respect to $\sigma$ and $\sigma'$, respectively.
Then, $\bp=\bp'$.
\end{lemma}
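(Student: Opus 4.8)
The plan is to prove the stronger statement that the two polyhedra coincide, $P^\sigma = P^{\sigma'}$. Since the text already established that each polyhedron $P^\tau$ (for a maximum weight permutation $\tau$) has a unique minimal element, equality of the polyhedra forces equality of their minimal elements, yielding $\bp = \bp'$ at once. By symmetry between $\sigma$ and $\sigma'$, it then suffices to show one inclusion, say $P^\sigma \subseteq P^{\sigma'}$.

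So I would fix $\bp \in P^\sigma$ and first reinterpret the defining inequalities. For each agent $i$, letting $j$ range over all of $[n]$, the constraints $w_{i,\sigma(i)} + p_{\sigma(i)} \ge w_{i,j} + p_j$ say exactly that the bundle indexed by $\sigma(i)$ attains the maximum subsidized value for agent $i$, i.e.
\[
  w_{i,\sigma(i)} + p_{\sigma(i)} = \max_{k \in [n]} \bigl(w_{i,k} + p_k\bigr).
\]

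The heart of the argument is a summation step. Summing this identity over $i$, I would exploit two facts: (i) because both $\sigma$ and $\sigma'$ are maximum weight permutations, $\sum_{i} w_{i,\sigma(i)} = \sum_{i} w_{i,\sigma'(i)}$; and (ii) because $\sigma$ and $\sigma'$ are permutations, $\sum_{i} p_{\sigma(i)} = \sum_{i} p_{\sigma'(i)} = \sum_{j} p_j$. Combining these gives
\[
  \sum_{i\in[n]} \bigl(w_{i,\sigma'(i)} + p_{\sigma'(i)}\bigr)
  = \sum_{i\in[n]} \bigl(w_{i,\sigma(i)} + p_{\sigma(i)}\bigr)
  = \sum_{i\in[n]} \max_{k\in[n]} \bigl(w_{i,k} + p_k\bigr).
\]
Finally I would invoke a termwise-from-sum argument: each summand on the left satisfies $w_{i,\sigma'(i)} + p_{\sigma'(i)} \le \max_{k}(w_{i,k}+p_k)$, yet the two outer sums are equal, so every one of these inequalities must be tight. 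Hence $w_{i,\sigma'(i)} + p_{\sigma'(i)} = \max_{k}(w_{i,k}+p_k) \ge w_{i,j}+p_j$ for all $i,j$, and since nonnegativity of $\bp$ is inherited from $\bp \in P^\sigma$, we conclude $\bp \in P^{\sigma'}$.

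I expect the only non-mechanical step to be spotting the reformulation ``the assigned bundle maximizes the subsidized value,'' after which the equal-sums/termwise-equality trick closes everything. An alternative route through the longest-path characterization of \Cref{lem:longest-path} seems available, but it looks messier, since it would require comparing path structures across two distinct envy graphs $G^{w,\sigma}$ and $G^{w,\sigma'}$; the polyhedral argument sidesteps this comparison entirely.
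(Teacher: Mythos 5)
Your proof is correct and is essentially the paper's own argument: the paper introduces the shifted weights $w'_{i,j}=w_{i,j}+p_j-q_i$ with $q_i=w_{i,\sigma(i)}+p_{\sigma(i)}$ (nonpositive, with zeros along $\sigma$) and deduces $w'_{i,\sigma'(i)}=0$ from the fact that all permutation totals shift by the same constant $\sum_i p_i - \sum_i q_i$, which is exactly your equal-sums/termwise-tightness step phrased via the reformulation $q_i=\max_{k}(w_{i,k}+p_k)$. Your stronger conclusion $P^\sigma=P^{\sigma'}$ is also implicit in the paper, since its computation uses only $\bp\in P^\sigma$ and never the minimality of $\bp$.
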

\begin{proof}
It is sufficient to prove that $\bp\in P^{\sigma'}$ and $\bp'\in P^{\sigma}$.
In addition, by symmetry, it is sufficient to show only the former.

Define a vector $\bq\in\mathbb{R}^n$ as $q_i= w_{i,\sigma(i)}+p_{\sigma(i)}$ for each $i\in[n]$ 
and a weight $w'\in \mathbb{R}^{[n]\times[n]}$ as $w'_{i,j}=w_{i,j}+p_j-q_i$ for each $i,j\in[n]$.
By definition of $\bp$, we have 
$w'_{i,j}=(w_{i,j}+p_j)-(w_{i,\sigma(i)}+p_{\sigma(i)})\le 0~(\forall i,j\in[n])$ and
$w'_{i,\sigma(i)}=0~(\forall i\in[n])$.

For any permutation $\pi$ of $[n]$, the difference between the total weights of $w$ and $w'$ is
\begin{align}
\sum_{i\in [n]} w_{i,\pi(i)} - \sum_{i\in [n]} w'_{i,\pi(i)}
=\sum_{i\in[n]}p_i-\sum_{i\in[n]}q_i,
\end{align}
which is a constant independent of $\pi$.
Thus, 
$\sigma$ and $\sigma'$ are maximum weight permutations for $w'$, and hence the total weight of $\sigma'$ for $w'$ is 
$\sum_{i\in [n]}w'_{i,\sigma'(i)} = \sum_{i\in [n]}w'_{i,\sigma(i)} = 0$.
As $w'$ is nonpositive,
it holds that $w'_{i,\sigma'(i)}=0$ for every $i\in[n]$.
Thus, for any $i,j\in[n]$, we have
\begin{align}
  w_{i,\sigma'(i)}+p_{\sigma'(i)}
  &=w'_{i,\sigma'(i)}+q_i
  =q_i\\
  &\ge w'_{i,j}+q_i
  = w_{i,j}+p_j,
\end{align}
where the inequality holds by $w'_{i,j}\le 0$.
Hence, $\bp$ must be in $P^{\sigma'}$.
\end{proof}
From this lemma, the minimum subsidy vector is determined for an allocation without specifying a maximum weight permutation.

It should be noted that the subsidy vector $\bp$ and the vector $\bq\in\mathbb{R}^n$ defined in the proof of \Cref{lem:bundle-subsidy} can be viewed as dual variables of an assignment problem. 
Here, the primal of the assignment problem is
\begin{align}
\begin{array}{rll}
\max       & \sum_{i=1}^n\sum_{j=1}^nw_{i,j} x_{i,j} &\\[3pt]
\text{s.t.}& \sum_{j=1}^nx_{i,j}=1 &(\forall i\in [n]),\\[2pt]
           & \sum_{i=1}^nx_{i,j}\ge 1 &(\forall j\in [n]),\\[2pt]
           & x_{i,j}\ge 0          &(\forall i,j\in [n]),
\end{array}\label{eq:primalLP}
\end{align}
and the dual is
\begin{align}
\begin{array}{rll}
\min       & \sum_{i=1}^n q_i - \sum_{j=1}^n p_j &\\[3pt]
\text{s.t.}& w_{i,j}+p_j-q_i\le 0 &(\forall i,j\in [n]),\\
           &p_j\ge 0              &(\forall j\in[n]).
\end{array}\label{eq:primalLP}
\end{align}
By the complementary slackness, we have $x_{i,j}(q_i-w_{i,j}-p_j)=0$ for any $i,j\in[n]$ if $\bm{x}$ and $(\bp,\bq)$ are optimal solutions.

The maximum weight permutation can be computed in polynomial time via a maximum-weight bipartite perfect matching algorithm.
The minimum subsidy vector for $w$ can be computed in polynomial time via the Floyd--Warshall algorithm.

\section{Bounding Subsidy for EF1 Allocations}

In this section, we prove the following key lemma.
\begin{lemma}\label{lem:upper}
Let $w\in\mathbb{R}^{[n]\times[n]}$ be a weight matrix.
We denote the sequence of numbers in descending order of $(\max_{j\in[n]}(w_{i,j}-w_{i,i}))_{i\in[n]}$ as $(\beta_1,\dots,\beta_n)$, i.e., $\beta_1\ge \beta_2\ge\dots\ge \beta_n$.
Let $p^*$ be the minimum subsidy vector for $w$. 
Then, the $r$th largest value among $p^*_1,\dots,p^*_n$ is at most $\sum_{\ell=1}^{n-r}\beta_\ell$ for $r=1,2,\dots,n$.
\end{lemma}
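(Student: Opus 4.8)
The plan is to reduce everything to the longest-path description of the minimum subsidy and then bound its order statistics by an inductive peeling over the diagonal envies. Fix a maximum weight permutation $\sigma$; by \Cref{lem:bundle-subsidy} the minimum subsidy vector $p^*$ is independent of this choice, and by \Cref{lem:longest-path} the height $\ell(i):=p^*_{\sigma(i)}$ equals the longest-path length from $i$ in $G^{w,\sigma}$, so $\{\ell(i)\}_i$ is exactly the multiset $\{p^*_b\}_b$ and it suffices to bound the order statistics of $\ell$. Writing $d_i:=\max_{j}(w_{i,j}-w_{i,i})$ for the diagonal envy of $i$ (so that $(\beta_1,\dots,\beta_n)$ is the decreasing rearrangement of $(d_i)_i$), the single most useful estimate is the per-edge bound
\[
\gamma_{i,j}=w_{i,\sigma(j)}-w_{i,\sigma(i)}\le \max_{b}w_{i,b}-w_{i,\sigma(i)}=d_i+(w_{i,i}-w_{i,\sigma(i)}),
\]
which couples each envy-graph edge leaving $i$ to $d_i$ up to the correction $w_{i,i}-w_{i,\sigma(i)}$. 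This correction vanishes exactly when the allocation is self-optimal (the identity is itself a maximum weight permutation), and in general it is controlled only on average, since $\sum_i (w_{i,i}-w_{i,\sigma(i)})\le 0$ because $\sigma$ is a maximum weight permutation.

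Next I would recast the goal in a cumulative form amenable to induction. Setting $\Theta_k:=\beta_1+\cdots+\beta_k$, the assertion ``the $r$-th largest subsidy is at most $\Theta_{n-r}$'' is equivalent (via $k=n-r$) to: for every $k$, at most $n-k-1$ bundles carry subsidy exceeding $\Theta_k$, i.e. the threshold sets $S_k:=\{i:\ell(i)\le\Theta_k\}$ satisfy $|S_k|\ge k+1$. I would prove this by induction on $k$. The base case $k=0$ is the standard fact $\min_b p^*_b=0$, so some bundle has zero subsidy. For the step, if $|S_{k-1}|\ge k+1$ we are done, so assume $|S_{k-1}|=k$; I would take a vertex $u$ of smallest height among those with $\ell(u)>\Theta_{k-1}$, follow its longest path until it first enters $S_{k-1}$, and bound $\ell(u)$ by the height of the last vertex before entry (which is at least $\ell(u)$ as $u$ is smallest in the high set) plus one fresh diagonal envy. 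The aim is $\ell(u)\le \Theta_{k-1}+\beta_k=\Theta_k$, which places $u$ in $S_k$ and gives $|S_k|\ge k+1$. The top case $r=1$ (equivalently $k=n-1$) is a clean sanity check: the longest path from the maximum-height vertex meets at most $n-1$ distinct other vertices, each contributing at most its diagonal envy, giving $\ell_{\max}\le\Theta_{n-1}$; and $k=1$ is immediate because every diagonal envy is at most $\beta_1$.

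The hard part is precisely this inductive step, and the obstruction is that $G^{w,\sigma}$ has negative edges, so a longest path may first \emph{ascend} in subsidy and only later descend. Consequently neither ``the realizing path has at most $n-r$ edges'' nor ``it meets at most $n-r$ vertices of positive diagonal envy'' holds, and bounding $\ell(u)$ by the sum of diagonal envies along its path overcounts by exactly the ascent deficits; in the general (non-self-optimal) case the correction term above is genuinely nonzero and can by itself account for the subsidy even when $d_u=0$. The crux is therefore to show that these deficits compensate, so that crossing from the high set into $S_{k-1}$ can be charged to a single diagonal envy of value at most $\beta_k$. I would first carry out the descent argument in the self-optimal case, where the correction drops out, and then in general absorb the corrections using the maximum weight property of $\sigma$, i.e. the absence of any positive-weight cycle in $G^{w,\sigma}$, to bound the height of the exit vertex. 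Degenerate configurations—ties among heights or diagonal envies, and zero-weight cycles, which are exactly what defeat naive descent arguments—I would remove by a generic perturbation of $w$ and recover the general statement by continuity, since both sides of the inequality depend continuously on $w$.

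Finally, specializing to an EF1 allocation gives $d_i=\max_j\bigl(v_i(A_j)-v_i(A_i)\bigr)\le 1$, hence $\beta_\ell\le 1$; the lemma then yields a maximum subsidy at most $\Theta_{n-1}\le n-1$ and a total subsidy $\sum_{r=1}^{n}\Theta_{n-r}\le \sum_{r=1}^{n}(n-r)=n(n-1)/2$, which are the advertised bounds.
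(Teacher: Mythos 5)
You correctly diagnose the central obstruction---the identity permutation need not be a maximum weight permutation, so $G^{w,\id}$ may contain positive-weight cycles and your per-edge bound $\gamma_{i,j}\le d_i+(w_{i,i}-w_{i,\sigma(i)})$ carries a correction term that is controlled only on average---but your proposal never resolves it: the inductive step is announced (``the crux is therefore to show that these deficits compensate \dots\ I would first carry out the descent argument in the self-optimal case \dots\ and then in general absorb the corrections'') rather than proved, and this is precisely the step the paper itself flags as the difficulty (it remarks right after the lemma statement that the claim cannot be derived directly from \Cref{lem:longest-path} because $G^{w,\id}$ can contain a positive directed cycle). Moreover, even in the self-optimal case your charging scheme fails as stated: following the longest path from the lowest high vertex $u$ until it first enters $S_{k-1}$ at $z$, with $v$ the last vertex before entry, gives only $\ell(u)\le(\text{prefix length inside the high set})+d_v+\ell(z)$, and the prefix can traverse up to $n-k-1$ further high-set vertices, each contributing a positive diagonal envy; there is no reason this total is a single $\beta_k$, nor even that $d_v\le\beta_k$, since the high set may contain exactly the vertices with the largest envies. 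So the ``charge the crossing to one fresh diagonal envy'' step breaks at the combinatorial level, independently of the correction terms.

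The paper supplies the missing mechanism, and it is instructive to contrast it with your plan. Instead of tracking corrections along paths, it absorbs them into the weights once and for all: with $q_i=\max_j(w_{i,j}+p^*_j)$ and $r_i=q_i-(w_{i,i}+p^*_i)\ge 0$, it boosts only the diagonal, setting $\hat w_{i,i}=w_{i,i}+r_i$ and $\hat w_{i,j}=w_{i,j}$ for $i\ne j$, and proves (\Cref{lem:p=hatp}, in the spirit of complementary slackness for the assignment LP) that $\id$ is a maximum weight permutation for $\hat w$ \emph{and} that the minimum subsidy vector is unchanged, $\hat{\bp}=\bp^*$. In $G^{\hat w,\id}$ every edge out of $s$ has weight $w_{s,t}-w_{s,s}-r_s\le d_s$, so \Cref{lem:longest-path} applies with $\sigma=\id$, and each $p^*_i$ is bounded by the sum of diagonal envies of the distinct sources on a simple longest path $P_i$. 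The order statistics then come not from your threshold-set induction but from a forest argument on all $n$ longest paths simultaneously: arranging the paths so that once two merge they coincide thereafter, their union has at most $n-1$ edges, and peeling off the path with the most edges yields $|P_r|\le n-r$ after sorting, whence the $r$-th largest subsidy is at most $\sum_{\ell=1}^{n-r}\beta_\ell$. If you want to salvage your route, the $\hat w$ reduction (or an equivalent statement that the corrections can be made pointwise nonnegative without changing $\bp^*$) is the lemma you must prove; your perturbation-and-continuity device addresses only degenerate ties and does not touch this gap.
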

We remark that the identical permutation $\id$ may not be a maximum weight permutation.
Thus, this lemma cannot be directly derived from \Cref{lem:longest-path} because $G^{w,\id}$ can contain a positive directed cycle that results in an infinitely long path.

For an EF1 allocation $\bA=(A_1,\dots,A_n)$, if we set the weight matrix as $w = (v_i(A_j))_{i,j\in [n]}$,
then the numbers $\beta_i~(i\in[n])$ in \Cref{lem:upper} are nonnegative and at most $1$ because $v_i(A_i)\ge v_{i}(A_j)-1$ for all $i,j\in[n]$.
Thus, the $r$th largest value in the minimum subsidy vector is at most $n-r$. 
Recall that an EF1 allocation $\bA$ can be found in polynomial time if the valuations are doubly monotone~\cite{bhaskar+2021} or $n=2$~\cite{BBB+2020}. Additionally, a maximum weight permutation $\sigma$ and the minimum subsidy vector can be computed in polynomial time.
Thus, from \Cref{lem:upper}, we can obtain the following theorem.
\begin{theorem}\label{thm:EF1-upper}
If the valuations are doubly monotone or $n=2$, there exists an envy-free allocation with a subsidy $(\bA,\bp)$ such that $\max_{i=1}^np_i\le n-1$ and $\sum_{i=1}^np_i\le \sum_{\ell=1}^{n}(n-\ell)=n(n-1)/2$.
Moreover, such an envy-free allocation with a subsidy can be computed in polynomial time.
\end{theorem}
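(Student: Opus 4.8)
The plan is to obtain \Cref{thm:EF1-upper} as a direct consequence of \Cref{lem:upper}. First I would fix an EF1 allocation $\bA=(A_1,\dots,A_n)$, which exists and is computable in polynomial time exactly in the two stated regimes (doubly monotone valuations or $n=2$) by the cited results. I then set the weight matrix $w=(v_i(A_j))_{i,j\in[n]}$, so that the diagonal $w_{i,i}=v_i(A_i)$ records each agent's value for her own bundle, and I let $p^*$ be the minimum subsidy vector for $w$ (well-defined independently of the chosen maximum weight permutation by \Cref{lem:bundle-subsidy}). The whole theorem then reduces to feeding this $w$ into \Cref{lem:upper}.

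The key step is to certify that the sorted quantities $(\beta_1,\dots,\beta_n)$ built from $\beta'_i:=\max_{j}(w_{i,j}-w_{i,i})$ satisfy $0\le\beta_\ell\le 1$. The lower bound is immediate by taking $j=i$. For the upper bound I would unfold the definition of EF1 for each ordered pair $(i,j)$ according to the witnessing set $X$ with $|X|\le 1$: if $X=\emptyset$ then $v_i(A_j)\le v_i(A_i)$; if $X=\{e\}$ with $e\in A_j$ then EF1 gives $v_i(A_i)\ge v_i(A_j\setminus\{e\})$ and the normalization $|v_i(Y\cup\{e\})-v_i(Y)|\le 1$ gives $v_i(A_j)\le v_i(A_j\setminus\{e\})+1\le v_i(A_i)+1$; and if $X=\{e\}$ with $e\in A_i$ then $v_i(A_j)\le v_i(A_i\setminus\{e\})\le v_i(A_i)+1$. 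In every case $w_{i,j}-w_{i,i}\le 1$, hence $\beta'_i\le 1$ for all $i$ and therefore $\beta_\ell\le 1$ for all $\ell$.

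With the bound $\beta_\ell\le 1$ established, I would invoke \Cref{lem:upper}: the $r$th largest entry of $p^*$ is at most $\sum_{\ell=1}^{n-r}\beta_\ell\le n-r$. Taking $r=1$ yields $\max_i p^*_i\le n-1$, and summing the $r$th largest entries over $r=1,\dots,n$ (which enumerates all entries of $p^*$) yields $\sum_i p^*_i\le\sum_{r=1}^n(n-r)=n(n-1)/2$. To exhibit the allocation-with-subsidy itself, I would take a maximum weight permutation $\sigma$ for $w$, computed via maximum-weight bipartite perfect matching, and the vector $p^*$, computed via the Floyd--Warshall procedure behind \Cref{lem:longest-path}; since $p^*\in P^\sigma$, the pair $(\bA^\sigma,p^*)$ is envy-free, and the per-agent subsidies $\{p^*_{\sigma(i)}\}_{i}$ form the same multiset as $\{p^*_j\}_{j}$, so the maximum and total match the bounds just derived. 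Chaining the three polynomial-time subroutines gives the claimed computability.

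Because \Cref{lem:upper} supplies all the combinatorial strength, the only genuinely delicate points inside the theorem are the EF1-to-$\beta$ case analysis above—whose crux is using the marginal normalization to turn ``remove one item'' into an additive slack of exactly $1$—and the bookkeeping that subsidies attached to bundles translate into per-agent subsidies with identical maximum and sum. The hard part is therefore essentially discharged by the lemma; I expect no serious obstacle in the corollary beyond stating these two conversions cleanly.
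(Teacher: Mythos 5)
Your proposal is correct and takes essentially the same route as the paper: it specializes \Cref{lem:upper} to the weight matrix $w=(v_i(A_j))_{i,j\in[n]}$ of an EF1 allocation, certifies $0\le\beta_\ell\le 1$ via EF1 together with the unit bound on marginal values, and obtains polynomial-time computability by chaining the EF1 algorithm, a maximum-weight bipartite matching, and the Floyd--Warshall computation of the minimum subsidy vector. Your three-case verification of $v_i(A_j)\le v_i(A_i)+1$ simply spells out what the paper asserts in one line, and your bundle-to-agent multiset bookkeeping matches the paper's implicit use of \Cref{lem:bundle-subsidy}.
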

\Cref{lem:upper} also implies that, even when the valuations are general and we only have an EF$k$ allocation, we can still derive an envy-free allocation with a subsidy of at most $k(n-1)$ per agent and $k\cdot n(n-1)/2$ in total.

We first observe that the bound of \Cref{thm:EF1-upper} cannot be improved even if the valuations are monotone and additive as long as an arbitrary EF1 allocation is considered.
\begin{example}\label{ex:n-1}
Consider an instance with $M=\{e_{i,j}\mid i\in[n],\,j\in[n+1]\}$.
The valuation of agent $i\in [n]$ for an item $e_{i',j}~(i'\in [n], j\in [n+1])$ is
\begin{align}
v_i(\{e_{i',j}\})=\begin{cases}
n/(n+1) & \text{if }i'=i,\\ 
1       & \text{if }i'=i-1,\\
0       & \text{otherwise}.
\end{cases}
\end{align}
Let $A_i=\{e_{i,1},\dots,e_{i,n+1}\}$ for each $i\in[n]$.
Then, $\bA=(A_1,\dots,A_n)$ is an envy-freeable EF1 allocation. 
Indeed, the weight matrix $w = (v_i(A_j))_{i,j\in [n]}$ is 
\begin{align}
\begin{pNiceMatrix}[first-row,first-col]
      &A_1    & A_2 & \dots & A_{n-1}& A_n \\
1     &n      &     &       &        &     \\
2     &n+1    & n   &       &        &     \\
\vdots&       & n+1 & \ddots&        &     \\
n-1   &       &     & \ddots& n      &     \\
n     &       &     &       & n+1    & n
\end{pNiceMatrix}.
\end{align}
It is not difficult to see that
the identity permutation $\id$ is a maximum weight one, and a path that visits $n, n-1, \dots, 2, 1$ in this order is the longest one in $G^{w,\id}$.
Thus, the minimum subsidy vector for $\bA$ is $\bp=(0,1,\dots,n-1)$. 
Hence, $\max_{i\in[n]}p_i=n-1$ and $\sum_{i\in[n]}p_i=n(n-1)/2$ hold.
\end{example}

In what follows, we prove \Cref{lem:upper}.
Let $\bA=(A_1,\dots,A_n)$ be an allocation and let $\bp^*$ be the minimum subsidy vector for the weight matrix $w = (v_i(A_j))_{i,j\in[n]}$.
Let $\sigma^*$ be a maximum weight permutation for $w$.
Note that the envy graph $G^{w,\sigma^*}$ may have an edge with a large weight.
Thus, the proof is not straightforward in such a case.

To prove the theorem, we use the following notations.
For each $i\in[n]$, we denote $q_i=\max_{j\in[n]}(v_i(A_j)+p^*_j)$, i.e., the maximum valuation for agent $i$ over bundles including subsidies, and $r_i=-(v_i(A_i)+p^*_i-q_i)$, i.e., the maximum envy for agent $i$.
Also, let us define
\begin{align}
  \hat{w}_{i,j}
  &=
  \begin{cases}
    v_i(A_i)+r_i&\text{if }i=j,\\
    v_i(A_j)&\text{if }i\ne j
  \end{cases}\label{eq:def-hatw}
\end{align}
for each $i,j\in[n]$.
Note that $r_i\ge 0$ and $\hat{w}_{i,j}\ge w_{i,j}$ for any $i,j\in[n]$.
We write $\hat{\bp}$ for the minimum subsidy vector for $\hat{w}$.
We demonstrate that $\hat{\bp}=\bp^*$ holds.

\begin{lemma}\label{lem:p=hatp}
The identity permutation is a maximum weight permutation for $\hat{w}$.
Moreover, the minimum subsidy vectors $\bp^*$ and $\hat{\bp}$ are the same.
\end{lemma}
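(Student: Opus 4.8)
The plan is to prove both assertions at once by exhibiting $\bp^*$ itself as a feasible subsidy for $\hat{w}$ under the identity permutation, and then pinning down that it is in fact the \emph{minimal} such subsidy via the longest-path description of \Cref{lem:longest-path}.

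First I would check the membership $\bp^*\in P^{\id}$ for the weight matrix $\hat{w}$. Writing $q_i=\max_{j}(v_i(A_j)+p^*_j)$, the identity $r_i=q_i-w_{i,i}-p^*_i$ gives $\hat{w}_{i,i}+p^*_i=q_i$, while for $j\ne i$ we have $\hat{w}_{i,j}+p^*_j=w_{i,j}+p^*_j\le q_i$ by definition of $q_i$. Hence $\hat{w}_{i,i}+p^*_i\ge \hat{w}_{i,j}+p^*_j$ for all $i,j$, which is exactly membership in $P^{\id}$ (recall $\bp^*\ge \bm{0}$). Summing these inequalities along an arbitrary permutation $\pi$ and cancelling the subsidy terms shows $\sum_i\hat{w}_{i,i}\ge\sum_i\hat{w}_{i,\pi(i)}$, so $\id$ is a maximum weight permutation for $\hat{w}$. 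Since $\hat{\bp}$ is the minimum element of $P^{\id}$ for $\hat{w}$ and $\bp^*$ lies in that polyhedron, this immediately yields $\hat{\bp}\le\bp^*$ componentwise.

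For the reverse inequality $\hat{\bp}\ge\bp^*$ — which I expect to be the crux — I would invoke \Cref{lem:longest-path} to identify $\hat{p}_i$ with the longest path from $i$ in $G^{\hat{w},\id}$, whose edge weights simplify, using $\hat{w}_{i,i}=q_i-p^*_i$, to $\hat{w}_{i,j}-\hat{w}_{i,i}=w_{i,j}+p^*_i-q_i$ for $j\ne i$. Since $q_i\ge w_{i,j}+p^*_j$, each edge obeys $\hat{w}_{i,j}-\hat{w}_{i,i}\le p^*_i-p^*_j$, with equality exactly when bundle $j$ is a \emph{best response} for agent $i$ under $(w,\bp^*)$, i.e.\ $w_{i,j}+p^*_j=q_i$. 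Telescoping along any path reconfirms $\hat{p}_i\le p^*_i$; to obtain the matching lower bound I would construct a path from $i$ that attains equality at every edge and terminates at a bundle with zero subsidy, so that its length telescopes to $p^*_i-0=p^*_i$, forcing $\hat{p}_i\ge p^*_i$.

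The main obstacle is therefore the existence of such a best-response chain: from any agent $i$ with $p^*_i>0$, I need a sequence $i=c_0,c_1,\dots,c_\ell$ in which each $c_{t+1}$ is a best response for $c_t$ and $p^*_{c_\ell}=0$. I would establish this from the minimality of $\bp^*$. Let $S$ be the set of agents reachable from $i$ by best-response edges, and suppose for contradiction that $p^*_c>0$ for all $c\in S$. The set $S$ is closed under best responses, and since any maximum weight permutation $\sigma^*$ assigns each agent a best-response bundle, $\sigma^*$ maps $S$ bijectively onto itself; hence lowering $p^*_c$ by a small $\varepsilon>0$ on exactly the coordinates in $S$ keeps the vector nonnegative and preserves every envy-free inequality — the only constraints that could break pair an agent in $S$ with a bundle outside $S$, and such a bundle is not a best response, so a strict gap absorbs $\varepsilon$. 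This would produce a strictly smaller envy-eliminating vector, contradicting the minimality of $\bp^*$ (and using \Cref{lem:bundle-subsidy} that this minimum is well defined). Thus $S$ must meet the zero set, yielding the chain and the bound $\hat{p}_i\ge p^*_i$. Combining the two inequalities gives $\hat{\bp}=\bp^*$.
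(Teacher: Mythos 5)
Your proof is correct, and while the first half matches the paper, the crux is handled by a genuinely different route. The opening step coincides in substance with the paper's: you verify $\bp^*\in P^{\id}$ for $\hat{w}$ via $\hat{w}_{i,i}+p^*_i=q_i$ and $\hat{w}_{i,j}+p^*_j\le q_i$, and your summation along an arbitrary permutation to conclude that $\id$ is maximum weight is if anything cleaner than the paper's (which routes through $\sigma^*$, establishing in passing that $\sigma^*$ is also maximum weight for $\hat{w}$ --- a fact it needs later); this yields $\hat{\bp}\le\bp^*$ in both treatments. For the reverse inequality the paper uses a second, symmetric feasibility argument: exploiting $\hat{w}_{i,\sigma^*(i)}=w_{i,\sigma^*(i)}$ (note $r_i=0$ whenever $\sigma^*(i)=i$) together with $\hat{w}\ge w$, it shows in a few lines that $\hat{\bp}$ is envy-eliminating for $w$ with respect to $\sigma^*$, whence $\bp^*\le\hat{\bp}$ by minimality of $\bp^*$. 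You instead prove a structural property of $\bp^*$: from every agent there is a chain of tight best-response edges terminating at an agent with zero subsidy, established by your $\varepsilon$-perturbation on the best-response-closed set $S$. That argument is sound --- the key claim $\sigma^*(S)=S$ holds because $\sigma^*(u)$ attains $q_u$ (as $\bp^*\in P^{\sigma^*}$), so $\sigma^*(S)\subseteq S$, and finiteness of $S$ with injectivity of $\sigma^*$ upgrades this to equality, which is exactly what guarantees a strictly positive slack in every constraint with $\sigma^*(u)\in S$ and $j\notin S$ --- and telescoping the tight path, whose edges in $G^{\hat{w},\id}$ have weight exactly $p^*_u-p^*_v$, gives $\hat{p}_i\ge p^*_i$ via \Cref{lem:longest-path}. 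This is more work than the paper's four lines of algebra, and it essentially re-derives the mechanism underlying \Cref{lem:longest-path} (tight envy paths reaching a zero-subsidy agent) rather than exploiting $\hat{w}\ge w$ directly; what it buys is a self-contained combinatorial certificate, namely an explicit path in $G^{\hat{w},\id}$ of length $p^*_i$ for each $i$, which the paper's proof never exhibits. Two small points to tidy: state that the tight chain can be taken to be a simple path (tight edges telescope, so removing a revisited cycle changes nothing, and \Cref{lem:longest-path} concerns paths), and note that the case $p^*_i=0$ is trivial since $\hat{p}_i\ge 0$; neither affects correctness.
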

\begin{proof}
Since $(A_{\sigma^*(1)},\dots,A_{\sigma^*(n)})$ with $\bp^*$ is envy-free by \Cref{lem:bundle-subsidy}, we have 
\begin{align}
w_{i,j}+p^*_{j}-q_i
&=v_i(A_j)+p^*_{j}-q_i\le 0
\label{eq:w-le}
\end{align}
for all $i,j\in[n]$, and
\begin{align}
w_{i,\sigma^*(i)}+p^*_{\sigma^*(i)}-q_i
=v_i(A_{\sigma^*(i)})+p^*_{\sigma^*(i)}-q_i=0 
\label{eq:w-eq}
\end{align}
for all $i\in[n]$.
By the definition of $\hat{w}$, we have 
\begin{align}
\begin{aligned}
\hat{w}_{i,j}+p^*_{j}-q_i=w_{i,j}+p^*_{j}&-q_i\le 0
% \\
% &(\forall i,j\in[n]\text{ with }i\ne j)
\end{aligned}\label{eq:hatw-le}
\end{align}
for all $i,j\in[n]$ with $i\ne j$ and
\begin{align}
  &\hat{w}_{i,i}+p^*_{i}-q_i=w_{i,i}+p^*_{i}-q_i+r_i= 0, \label{eq:hatw-eq}
\end{align}
for all $i\in[n]$, 
where the inequality holds by \eqref{eq:w-le}.
In addition, for any $i\in[n]$ with $i\ne \sigma^*(i)$, we have
\begin{align}
  &\hat{w}_{i,\sigma^*(i)}+p^*_{\sigma^*(i)}-q_i=w_{i,\sigma^*(i)}+p^*_{\sigma^*(i)}-q_i=0 \label{eq:hatw-eq2}
\end{align}
by \eqref{eq:w-eq}.
Thus, for each $i\in[n]$, we obtain
\begin{align}
\hat{w}_{i,\sigma^*(i)}
=q_i-p^*_{\sigma^*(i)}
=w_{i,\sigma^*(i)}\label{eq:w-hatw}
\end{align}
since the first equality holds by \eqref{eq:hatw-eq} and \eqref{eq:hatw-eq2} and the second equality holds by \eqref{eq:w-eq}.

By \eqref{eq:hatw-le} and \eqref{eq:hatw-eq}, the total weight $\sum_{i=1}^n \hat{w}_{i,\sigma(i)}$ is at most $\sum_{i\in[n]}q_i-\sum_{j\in[n]}p_j^*$ for any permutation $\sigma$.
Thus, $\sigma^*$ and the identical permutation $\id$ are maximum weight permutations for $\hat{w}$ since the total weight of $\sigma^*$ and $\id$ for $\hat{w}$ are $\sum_{i\in[n]}q_i-\sum_{j\in[n]}p_i^*$ by \eqref{eq:hatw-eq} and \eqref{eq:hatw-eq2}.
Moreover, $\bp^*$ is an envy-eliminating subsidy for $\hat{w}$ (with respect to $\sigma^*$) since $\hat{w}_{i,\sigma^*(i)}+p^*_{\sigma^*(i)}=q_i\ge \hat{w}_{i,j}+p^*_j$ for any $i,j\in[n]$ by \eqref{eq:hatw-le}, \eqref{eq:hatw-eq}, and \eqref{eq:hatw-eq2}.

As $\bp^*$ is an envy-eliminating subsidy for $\hat{w}$, we have $\hat{\bp}\le \bp^*$.
To prove that $\hat{\bp}=\bp^*$, what is left is to show that $\hat{\bp}$ is an envy-eliminating subsidy vector for $w$.

Define $\hat{q}_i=\max_{j\in[n]}(\hat{w}_{i,j}+\hat{p}_j)$ for each $i\in[n]$.
Since $\hat{\bp}$ is the minimum subsidy vector for $\hat{w}$ with respect to $\sigma^*$ by \Cref{lem:bundle-subsidy}, we have
\begin{align}
  &\hat{w}_{i,\sigma^*(i)}+\hat{p}_{\sigma^*(i)}-\hat{q}_i=0 \quad (\forall i\in[n]).\label{eq:hatw'-eq2}
\end{align}
Thus, for each $i\in[n]$, we have
\begin{align}
    w_{i,\sigma^*(i)}+\hat{p}_{\sigma^*(i)}
    &=\hat{w}_{i,\sigma^*(i)}+\hat{p}_{\sigma^*(i)}
    =\hat{q}_i
    =\max_{j\in[n]}(\hat{w}_{i,j}+\hat{p}_j)
    \ge \max_{j\in[n]}(w_{i,j}+\hat{p}_j),
\end{align}
where the first equality holds by \eqref{eq:w-hatw}, 
the second equality holds by \eqref{eq:hatw'-eq2}, and 
the last inequality holds by $\hat{w}_{i,j}\ge w_{i,j}$ for any $i,j\in[n]$.
Therefore, $\hat{\bp}$ is an envy-eliminating subsidy vector for $w$, which completes the proof.
\end{proof}

By definition, the weight of each edge in $G^{\hat{w},\id}$ is at most that of the corresponding edge in $G^{w,\id}$ because 
\begin{align}
    \hat{w}_{i,j}-\hat{w}_{i,i}
    =w_{i,j}-(w_{i,i}+r_i)
    \le w_{i,j}-w_{i,i}
\end{align}
for any $i,j\in[n]$ with $i\ne j$.
By combining \Cref{lem:p=hatp} with \Cref{lem:longest-path}, we prove \Cref{lem:upper}.
\begin{proof}[Proof of \Cref{lem:upper}]
Recall that $\hat{\bp}=\bp^*$ and $\id$ is a maximum weight permutation
by \Cref{lem:p=hatp}.
For each $i\in[n]$, let $P_i\subseteq E$ be a longest path in $G^{\hat{w},\id}$ starting from $i$. 
By \Cref{lem:longest-path}, $\hat{p}_i~(=p^*_i)$ is the length of $P_i$ in $G^{\hat{w},\id}$.
Note that $P_i$ is a simple path.
As the weight of each edge in $G^{\hat{w},\id}$ is at most that of the corresponding edge in $G^{w,\id}$, we have 
\begin{align}
p^*_i
&=\sum_{(s,t)\in P_i}(\hat{w}_{s,t}-\hat{w}_{s,s})
\le\sum_{(s,t)\in P_i}(w_{s,t}-w_{s,s})
\le \sum_{\ell=1}^{|P_i|}\beta_\ell
\label{eq:p^* bound}
\end{align}
for each $i\in[n]$.

What is left is to provide upper bounds of the numbers of edges in the longest paths $P_1,\dots,P_n$.
Let $S = \bigcup_{i\in [n]}P_i$.
Without loss of generality, we may assume that, if two paths $P_i$ and $P_j$ share a common vertex, all of the edges that follow the vertex in these two paths are identical.
Then, $S$ is a directed forest and $|S|\le n-1$.
We relabel the vertices as $|P_1|\ge |P_2|\ge |P_n|$.
Then, $|P_1|\le n-1$ since $P_1\subseteq S$ and $|S|\le n-1$. 
As $P_1$ is longest among $P_1,\dots,P_n$, every path in $P_2,\dots,P_n$ does not use vertex $1$.
Indeed, if $P_i~(i>1)$ passes vertex $1$, then by the assumption, we have $|P_i|>|P_1|$.
Hence, $\bigcup_{i=2}^{n}P_i$ forms a directed forest that does not contain vertex $1$, and  $|P_2|\le n-2$.
By repeatedly applying the same argument, we have $|P_i|\le n-i$ for $i=1,2,\dots,n$.

Therefore, for $r=1,2,\dots,n$, the $r$th largest value in $p^*$ is at most $\sum_{\ell=1}^{|P_r|}\beta_{\ell}\le\sum_{\ell=1}^{n-r}\beta_\ell$ by \eqref{eq:p^* bound}.
\end{proof}

\section{Improved Bounds for Monotone Valuations}
In this section, we provide an improved upper bound of subsidy when the valuations are monotone.
As observed in \Cref{ex:n-1}, a maximum subsidy of $n-1$ is required to guarantee envy-freeness for an EF1 allocation.
We demonstrate that the upper bound can be improved by slightly modifying a given EF1 allocation.
Formally, we present the following theorem.
\begin{theorem}\label{thm:improved}
  If $n\ge 3$ and the valuations are monotone, there exists an envy-free allocation with a subsidy $(\bA,\bp)$ such that $\max_{i\in[n]}p_i\le n-1.5$ and $\sum_{i\in[n]}p_i\le (n^2-n-1)/2$.
  Moreover, such an envy envy-free allocation with a subsidy can be computed in polynomial time.
\end{theorem}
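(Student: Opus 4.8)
The plan is to keep using \Cref{lem:upper} as the only subsidy estimate, but to feed it a slightly modified allocation instead of the given EF1 one. Concretely, I would reduce the theorem to the following claim: from an EF1 allocation one can compute in polynomial time an allocation $\bA=(A_1,\dots,A_n)$ whose weight matrix $w=(v_i(A_j))_{i,j\in[n]}$ has a controlled envy profile. Writing $\beta_i=\max_{j\in[n]}(w_{i,j}-w_{i,i})$ for the raw (unsubsidized) envy of agent $i$ and $\beta_1\ge\dots\ge\beta_n$ for its decreasing rearrangement, the target is $\beta_\ell\le 1$ for every $\ell$ (which monotonicity together with the unit marginal bound already guarantees for any EF1 allocation) and, crucially, $\beta_{n-1}\le\tfrac12$; equivalently, at least two agents have envy at most $\tfrac12$.

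First I would check that this profile yields exactly the stated bounds. By \Cref{lem:upper} the $r$-th largest entry of the minimum subsidy vector is at most $\sum_{\ell=1}^{n-r}\beta_\ell$, so the maximum subsidy is at most $\sum_{\ell=1}^{n-1}\beta_\ell\le(n-2)\cdot 1+\tfrac12=n-1.5$, while the total subsidy is at most
\begin{align}
\sum_{r=1}^{n}\sum_{\ell=1}^{n-r}\beta_\ell
=\sum_{\ell=1}^{n-1}(n-\ell)\beta_\ell
\le\sum_{\ell=1}^{n-2}(n-\ell)+\tfrac12
=\frac{n^2-n-2}{2}+\frac12
=\frac{n^2-n-1}{2}.
\end{align}
Note that both inequalities are tight exactly when $\beta_{n-1}=\tfrac12$, which is why the savings in the statement are precisely $\tfrac12$ in the per-agent bound and $\tfrac12$ in the total. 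Thus the whole theorem reduces to producing the modified allocation.

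Second, the construction. Starting from an EF1 allocation (computable in polynomial time for monotone valuations), I would look at an agent $a$ attaining the maximum envy $\beta_1$, say toward bundle $A_b$, and let $e^\star\in A_b$ witness the EF1 condition for $(a,b)$, i.e.\ $v_a(A_b\setminus\{e^\star\})\le v_a(A_a)$, so that $\beta_1\le v_a(A_b)-v_a(A_b\setminus\{e^\star\})\le 1$. If this marginal is at most $\tfrac12$ then $\beta_1\le\tfrac12$ and every agent is already low-envy, so we are done. Otherwise I would transfer $e^\star$ into $A_a$: by monotonicity and the choice of $e^\star$, $v_a(A_b\setminus\{e^\star\})\le v_a(A_a)\le v_a(A_a\cup\{e^\star\})$, so $a$ becomes envy-free and $a$'s envy toward every other bundle can only decrease. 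Applying a suitable transfer again to the remaining high-envy agents is intended to protect a second agent at envy at most $\tfrac12$.

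The hard part will be controlling the envy that a transfer \emph{creates}: enlarging $A_a$ by $e^\star$ raises every other agent's envy toward $A_a$ by that agent's marginal for $e^\star$, which can be as large as $1$, and since some agent may already envy $A_a$ by nearly $1$, a careless transfer can push a pairwise envy above $1$ and violate the hypothesis $\beta_\ell\le 1$. The technical heart is therefore to choose \emph{which} item to move and \emph{to whom} so that (i) a new low-envy agent is produced, (ii) no pairwise envy ever exceeds $1$, and (iii) the procedure terminates; I would organize this through a potential/termination argument on the total envy together with a case split on the marginal of the moved item, which is exactly where one can only guarantee the second protected agent reaches envy $\tfrac12$ rather than $0$. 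This is also where $n\ge 3$ is essential: with three or more agents the leftover envy can be absorbed by an unprotected agent, whereas for $n=2$ protecting two agents would force all envy below $\tfrac12$, contradicting the fact (noted after \Cref{ex:n-1}) that a subsidy of $1$ is unavoidable.
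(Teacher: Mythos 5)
Your reduction is sound and matches the paper's quantitative plan: feeding \Cref{lem:upper} an allocation in which every envy $\beta_\ell$ is at most $1$ and at least two envies are at most $\tfrac12$ does yield $\max_i p_i\le n-1.5$ and $\sum_i p_i\le (n^2-n-1)/2$, and your arithmetic for both sums is correct. But the entire content of the theorem is the construction of such an allocation, and there your proposal has a genuine gap that you yourself flag: you never specify a transfer rule that provably (i) keeps every pairwise envy at most $1$, (ii) produces a second agent with envy at most $\tfrac12$, and (iii) terminates. The move you sketch --- shifting the EF1 witness item into the enviest agent's own bundle --- is exactly the kind of transfer that can break (i), since it raises every other agent's envy toward the enlarged bundle by up to $1$; moreover, after one such move the allocation need not be EF1 any longer, so even the guarantee $\beta_\ell\le 1$ that you import from EF1 is lost for later iterations, and no potential function or termination argument is exhibited. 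As written, the proof stops precisely at the theorem's technical heart.

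The paper avoids iteration entirely, and its mechanism differs from yours in two essential ways that your sketch is missing. First, it preprocesses the EF1 allocation so that it maximizes utilitarian welfare among all EF1 rearrangements of its bundles (a bipartite-matching step); this optimality is what powers \Cref{lem:v1j-upper}, the key upper bound on $v_1(A_j)$, and nothing in your proposal plays that role. Second, it splits on $\max_i p^*_i$: if this is at most $n-1.5$, \emph{no modification is made at all} (in this branch no two low-envy agents are ever produced; the total bound follows directly from \Cref{lem:upper}), while if it exceeds $n-1.5$, the longest path in $G^{\hat{w},\id}$ must be Hamiltonian with positive edge weights $1-s_i$ whose deficits satisfy $\sum_{i=2}^n s_i<\tfrac12$ --- an extremely rigid, near-degenerate envy structure. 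It is this rigidity, not a clever iterative rule, that makes a \emph{single} transfer safe: the EF1-witness item $e^*$ (for agent $2$ against $A_1$) is removed from the zero-subsidy bundle $A_1$ at the envied end of the path and added to the most-subsidized bundle $A_n$ --- roughly the opposite direction from your move --- and \Cref{lem:what-upper,lem:v1j-upper} show that in the resulting $\bA'$ agents $1$ and $2$ reach your target envy level $\tfrac12$ while the rest stay at most $1$, except that the single edge $(1,n)$ may still be large; that last case is handled by comparing $\bA'$ with its cyclic rotation $\bA''$ via a case split on the sign of $v_1(A'_n)-v_1(A'_1)$, using that the two are rearrangements of one another. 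So you have correctly isolated the target envy profile, but the two ideas that actually realize it --- the welfare-optimal EF1 preprocessing, and the observation that failure of the subsidy bound forces an almost-tight Hamiltonian envy path in which one carefully directed transfer suffices --- are absent, and the iterative scheme you propose in their place would need a substantially new argument to meet conditions (i)--(iii).
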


Note that if $n=2$, \Cref{thm:EF1-upper} implies that there exists an envy-free allocation with a subsidy where only one agent receives a subsidy of at most $1$. This bound cannot be improved even when there is one item with a value of $1$ for each agent.

In what follows, we assume that $n \geq 3$.
We describe that we can obtain in polynomial time an EF1 allocation $\bA$ satisfying $\sum_{i\in[n]}v_i(A_i)\ge \sum_{i\in[n]}v_i(A_{\sigma(i)})$ for any permutation $\sigma$ such that $\bA^{\sigma}$ is EF1.
We first compute an EF1 allocation $\bX$ in polynomial time using the envy-cycles algorithm~\cite{LiptonMaMo04}.
Next, we modify $\bX$ as follows.
Construct a bipartite graph $([n],[n]; E)$ where an edge $(i,j)\in E$ exists if and only if the EF1 criterion still holds for agent $i$ when we swap bundles of agents $i$ and $j$, i.e., $v_i(X_j)\ge \min_{Y\subseteq X_k:\,|Y|\le 1}v_i(X_k\setminus Y)$ for all $k\in[n]$. We assign the weight of an edge $(i,j)\in E$ as $v_i(X_j)$.
Then we find the maximum weight perfect matching on the bipartite graph.
By permutating bundles according to the matching, we can obtain the desired allocation $\bA$.

Define $w=(v_i(A_j))_{i,j\in [n]}$. 
Let $\bp^*$ be the minimum subsidy vector for $w$.
In addition, let $q_i=\max_j(v_i(A_j)+p^*_j)$ and $r_i=-(v_i(A_i)+p^*_i-q_i)$ for each $i\in[n]$.
Let $\hat{w}$ be the weights defined as \eqref{eq:def-hatw}.

A main task in the proof of \Cref{thm:improved} is to show that there exists an allocation with a subsidy $(\bA'',\bp'')$ such that $\max_{i\in[n]}p''_i \leq n-1.5$ by modifying $(\bA, \bp^*)$.
We assume that $\max_{i\in[n]}p_i^*> n-1.5$ since otherwise (i.e., $\max_{i\in[n]}p_i^*\le n-1.5$) we have $\sum_{i\in[n]}p_i^*\le\sum_{k=1}^{n-1}\min\{k,\,n-1.5\}=(n^2-n-1)/2$ by \Cref{lem:upper}.
Here, $\beta_i \leq 1~(\forall i\in [n])$ in the lemma since $\bA$ is EF1.
By \Cref{lem:p=hatp}, $\bp^*$ is the minimum subsidy vector for $\hat{w}$, and $\id$ is a maximum weight permutation for $\hat{w}$.
Since $r_i\ge 0~(\forall i\in[n])$, the weight of edge $(i,j)$ in $G^{\hat{w},\id}$ is 
\begin{align}
\hat{w}_{i,j}-\hat{w}_{i,i}
&=v_i(A_j)-(v_i(A_i)+r_i)
\le v_i(A_j)-v_i(A_i)
\le 1.\label{eq:edge_upper}
\end{align}
By \Cref{lem:longest-path}, the length of a longest path in $G^{\hat{w},\id}$ is $\max_{i\in[n]} p^*_i~(>n-1.5)$.
Since the weight of each edge is at most $1$ by \eqref{eq:edge_upper}, the longest path must contain all the vertices and $n-1$ positive weight edges.
Without loss of generality, we may assume that $(n,n-1,\dots,1)$ is the longest path (see \Cref{fig:envy-graph-path}).

Let $s_i=1+r_i+v_i(A_i)-v_i(A_{i-1})$ for $i=2,3,\dots,n$.
Note that $1-s_i$ is the positive weight of $(i,i-1)$, and
\begin{align}
0\le r_i\le s_i\leq 1.\label{eq:ri<si}
\end{align}
For each $i\in[n]$, the path $(i,i-1,\dots,1)$ must be a longest path starting at $i$ in $G^{\hat{w},\id}$. 
This is because, if there was a longer path starting at $i$, we could replace the subpath of the longest path $(n,n-1,\dots,1)$ starting from $i$ with this longer path, thereby creating a longer path, contradicting the assumption that $(n,n-1,\dots,1)$ is a longest path.
Thus, for each $i\in[n]$, it holds that $p_i^*=\sum_{j=2}^i (1-s_j)$.
In addition, since $(n,n-1,\dots,1)$ is a longest path, $\max_{i\in[n]}p_i^*$ is achieved by $i=n$.
Then since $\sum_{i=2}^n (1-s_i)=\max_{i\in[n]}p_i^*>n-1.5$, we have
\begin{align}
0\le \sum_{i=2}^nr_i\le \sum_{i=2}^n s_i<0.5. \label{eq:si_sum}
\end{align}

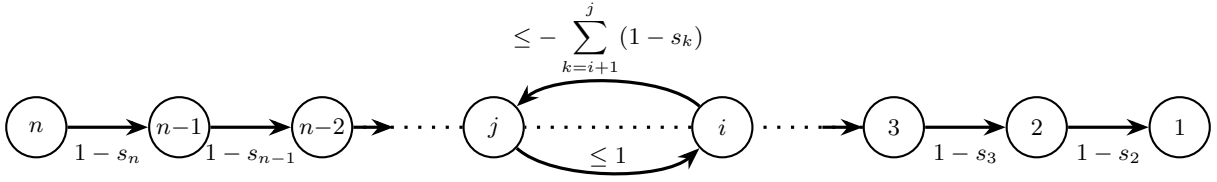
\begin{figure*}
\centering
\begin{tikzpicture}[xscale=1.9,font=\small]
\begin{scope}[every node/.style={circle,thick,draw,minimum size=8mm}]
    \node[label=center:$1$]     (A) at (8,0) {};
    \node[label=center:$2$]     (B) at (7,0) {};
    \node[label=center:$3$]     (C) at (6,0) {};
    \node[label=center:$i$]     (i) at (4.8,0) {};
    \node[label=center:$j$]     (j) at (3.2,0) {};
    \node[label=center:$n{-}2$] (D) at (2,0) {};
    \node[label=center:$n{-}1$] (E) at (1,0) {};
    \node[label=center:$n$]     (F) at (0,0) {};
\end{scope}
\draw[loosely dotted,very thick] (C) -- (i);
\draw[loosely dotted,very thick] (i) -- (j);
\draw[loosely dotted,very thick] (j) -- (D);
\begin{scope}[>={Stealth[black]},
              every node/.style={below=1mm},
              every edge/.style={draw=black,very thick}]
    \path [->] (B) edge node {$1-s_2$} (A);
    \path [->] (C) edge node {$1-s_3$} (B);
    \path [->] (5.5,0) edge node {} (C);
    \path [->] (D) edge node {} (2.5,0);
    \path [->] (E) edge node {$1-s_{n-1}$} (D);
    \path [->] (F) edge node {$1-s_n$} (E);
    \path [->] (i) edge[bend right=60] node[above,yshift=1pt,font=\small] {$\displaystyle \le -\sum_{k=i+1}^j(1-s_k)$} (j); 
    \path [->] (j) edge[bend right=60] node[above,yshift=1pt,font=\small] {$\le 1$} (i); 
\end{scope}
% \node[below,red] at (A.south) {$+s_2$};
% \node[below,red] at (E.south) {$+s_n$};
\end{tikzpicture}
\caption{The envy graph $G^{\hat{w},\id}$}
\label{fig:envy-graph-path}
\end{figure*}

Next, we observe that the weight of each edge, from a vertex with a lower index to a higher index, is small. This observation will be used to evaluate modified allocations.
\begin{lemma}\label{lem:what-upper}
For $i,j\in[n]$ with $i<j$, it holds that
\begin{align}
v_i(A_j)-v_i(A_i)-r_i
\le -\sum_{k=i+1}^{j}(1-s_k).
\end{align}
\end{lemma}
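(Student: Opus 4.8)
The plan is to recognize that the left-hand side of the claimed inequality is exactly the weight of the edge $(i,j)$ in the envy graph $G^{\hat{w},\id}$, and then to bound that weight by a difference of subsidies using the fact that $\bp^*$ already eliminates envy for $\hat{w}$.

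First, I would rewrite the left-hand side using the definition of $\hat{w}$ in \eqref{eq:def-hatw}: since $\hat{w}_{i,i}=v_i(A_i)+r_i$ and $\hat{w}_{i,j}=v_i(A_j)$ for $i\ne j$, we have
\[
v_i(A_j)-v_i(A_i)-r_i=\hat{w}_{i,j}-\hat{w}_{i,i},
\]
which is precisely the weight $\gamma_{i,j}$ of the edge $(i,j)$ in $G^{\hat{w},\id}$. Next, I would invoke \Cref{lem:p=hatp}, which tells us that $\id$ is a maximum weight permutation for $\hat{w}$ and that $\bp^*$ is its (envy-eliminating) minimum subsidy vector. Membership of $\bp^*$ in $P^{\id}$ for $\hat{w}$ then gives, for the pair $(i,j)$,
\[
\hat{w}_{i,i}+p^*_i\ge \hat{w}_{i,j}+p^*_j,
\qquad\text{i.e.,}\qquad
\hat{w}_{i,j}-\hat{w}_{i,i}\le p^*_i-p^*_j.
\]
Equivalently, by \Cref{lem:longest-path} one may append a longest path from $j$ to the edge $(i,j)$; since $G^{\hat{w},\id}$ has no positive cycle, the resulting walk from $i$ has length $\gamma_{i,j}+p^*_j\le p^*_i$, giving the same bound.

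Finally, I would substitute the closed form $p^*_i=\sum_{k=2}^{i}(1-s_k)$ derived just before the lemma. For $i<j$ this yields $p^*_i-p^*_j=-\sum_{k=i+1}^{j}(1-s_k)$, and chaining this with the displays above produces exactly the claimed bound. The argument is essentially bookkeeping once the first identification is made; the only genuine step is to see the left-hand side as an envy-graph edge weight and feed it into the envy-free inequality for $\hat{w}$. The point that warrants care is the hypothesis $i<j$: it is what makes $p^*_i-p^*_j$ equal to the negative of the partial sum $\sum_{k=i+1}^{j}(1-s_k)$ of the nonnegative quantities $1-s_k$, so the direction of the index comparison must be tracked to get both the sign and the range of summation right.
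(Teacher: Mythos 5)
Your proof is correct, but it takes a slightly different route from the paper's. The paper argues directly on the envy graph: since $\id$ is a maximum weight permutation for $\hat{w}$ (\Cref{lem:p=hatp}), $G^{\hat{w},\id}$ contains no positive-weight directed cycle, and applying this to the cycle $i\to j\to j-1\to\cdots\to i$, whose return edges $(k,k-1)$ have weight $\hat{w}_{k,k-1}-\hat{w}_{k,k}=1-s_k$, gives
\begin{align}
(\hat{w}_{i,j}-\hat{w}_{i,i})+\sum_{k=i+1}^{j}(1-s_k)\le 0,
\end{align}
which is exactly the claim. You instead use dual feasibility, $\bp^*\in P^{\id}$ for $\hat{w}$, to get $\hat{w}_{i,j}-\hat{w}_{i,i}\le p^*_i-p^*_j$, and then substitute the closed form $p^*_i=\sum_{k=2}^{i}(1-s_k)$. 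The two are equivalent in substance---the subsidy vector is precisely a potential certifying the absence of positive cycles---but there is a difference in scope worth noting: the closed form for $p^*$ is derived only under the section's standing assumptions that $\max_{i\in[n]}p^*_i>n-1.5$ and that, after relabeling, $(n,n-1,\dots,1)$ is a longest (Hamiltonian) path, whereas the paper's cycle argument needs neither and uses only \Cref{lem:p=hatp} and the definition of $s_k$. Since the lemma is stated and applied only within that standing context, your proof is valid there, just slightly less self-contained. Your alternative phrasing via \Cref{lem:longest-path} (appending a longest path from $j$ to the edge $(i,j)$) is also sound, provided---as you correctly flag---that nonpositive cycles are excised from the resulting walk when it revisits $i$.
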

\begin{proof}
As the identity permutation is a maximum weight permutation for $\hat{w}$ by \Cref{lem:p=hatp}, envy graph $G^{\hat{w},\id}$ contains no positive-weight directed cycle. Hence, for $i,j\in[n]$ with $i<j$, we have
\begin{align}
(\hat{w}_{i,j}-\hat{w}_{i,i})
+\sum_{k=i+1}^{j}(\hat{w}_{k,k-1}-\hat{w}_{k,k})\le 0.
\end{align}
Since $\hat{w}_{k,k-1}-\hat{w}_{k,k} = 1-s_k$, we obtain
$v_i(A_j)-v_i(A_i)-r_i = \hat{w}_{i,j}-\hat{w}_{i,i} \le -\sum_{k=i+1}^{j}(1-s_k)$.
\end{proof}

For the case $i=1$, we can also obtain the following bound.
\begin{lemma}\label{lem:v1j-upper}
For every $j\in\{2,3,\dots,n\}$, we have
\begin{align}
v_1(A_j)\le \max\left\{v_1(A_1)-(1-s_2),\,\min_{e\in A_1}v_1(A_1\setminus\{e\})\right\}.
\end{align}
\end{lemma}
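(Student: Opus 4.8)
The plan is to exhibit, for the fixed $j$, a single EF1 permutation of the bundles whose weight comparison with $\bA$ produces the first term of the maximum, and to let the second term absorb the only case in which that permutation fails to be EF1. Concretely, I would let $\ell^*\in\argmax_{\ell\in\{2,\dots,n\}}v_1(A_\ell)$ be a foreign bundle that agent $1$ values most, so that in particular $v_1(A_j)\le v_1(A_{\ell^*})$. I then consider the path-shift $\sigma$ defined by $\sigma(1)=\ell^*$, $\sigma(i)=i-1$ for $2\le i\le\ell^*$, and $\sigma(i)=i$ otherwise; under $\bA^\sigma$ agent $1$ receives $A_{\ell^*}$ while each agent $i\in\{2,\dots,\ell^*\}$ receives the bundle $A_{i-1}$ that it envies along the longest path.

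First I would check that $\bA^\sigma$ is EF1 except possibly for agent $1$'s comparison against $A_1$. Each agent $i\in\{2,\dots,\ell^*\}$ now holds $A_{i-1}$, which it weakly prefers to $A_i$ because the path edge $(i,i-1)$ has positive weight, i.e.\ $v_i(A_{i-1})-v_i(A_i)-r_i=1-s_i>0$, so $v_i(A_{i-1})>v_i(A_i)+r_i\ge v_i(A_i)$ using $r_i\ge 0$ from \eqref{eq:ri<si}; since the items are goods, EF1 is inherited from $\bA$ because $v_i(A_{i-1})\ge v_i(A_i)\ge\min_{e\in A_\ell}v_i(A_\ell\setminus\{e\})$ for every $\ell$, and the agents with $i>\ell^*$ are unchanged. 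For agent $1$ holding $A_{\ell^*}$ and any other foreign bundle $A_\ell$ ($\ell\ge 2$), monotonicity reduces the EF1 requirement to $v_1(A_{\ell^*})\ge\min_{e\in A_\ell}v_1(A_\ell\setminus\{e\})$, which holds since $v_1(A_{\ell^*})\ge v_1(A_\ell)$ by the choice of $\ell^*$. Hence the single comparison that can violate EF1 is agent $1$ against $A_1$, and it is satisfied exactly when $v_1(A_{\ell^*})\ge\min_{e\in A_1}v_1(A_1\setminus\{e\})$.

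This yields a clean dichotomy. If $v_1(A_{\ell^*})\ge\min_{e\in A_1}v_1(A_1\setminus\{e\})$, then $\bA^\sigma$ is fully EF1, so the maximum-weight-among-EF1-permutations property of $\bA$ forces the weight of $\bA^\sigma$ not to exceed that of $\bA$; since the weight difference equals $v_1(A_{\ell^*})-v_1(A_1)+\sum_{i=2}^{\ell^*}\bigl((1-s_i)+r_i\bigr)$ and every summand is nonnegative, I obtain $v_1(A_{\ell^*})\le v_1(A_1)-(1-s_2)$ and thus $v_1(A_j)\le v_1(A_1)-(1-s_2)$. Otherwise $v_1(A_{\ell^*})<\min_{e\in A_1}v_1(A_1\setminus\{e\})$, and then $v_1(A_j)\le v_1(A_{\ell^*})<\min_{e\in A_1}v_1(A_1\setminus\{e\})$. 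In both branches $v_1(A_j)$ is bounded by the claimed maximum.

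The step I expect to be most delicate is the EF1 verification of $\bA^\sigma$, and in particular the observation that choosing $\ell^*$ to be agent $1$'s most preferred foreign bundle is precisely what makes agent $1$ EF1 toward every foreign bundle simultaneously, isolating $A_1$ as the sole possible source of violation; this is what lets the two terms of the maximum appear as the two sides of one dichotomy. A secondary point requiring care is that the permutation must be the path-shift, so that the displaced agents land on bundles they already envy and EF1 is preserved for free, rather than a plain transposition, which would not in general keep $\bA^\sigma$ EF1.
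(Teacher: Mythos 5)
Your proposal is correct and follows essentially the same route as the paper: you pick the same maximizer (the paper's $j^*$, your $\ell^*$), form the same cyclic path-shift allocation $\bA^{(j^*)}$, verify EF1 for the shifted agents via the nonnegative path-edge weights $1-s_i$ and for agent $1$ via the choice of $\ell^*$, and then invoke the defining property that $\bA$ maximizes utilitarian welfare among EF1 permutations to extract $v_1(A_{\ell^*})\le v_1(A_1)-(1-s_2)$, with the second term of the maximum absorbing the case where the shift fails EF1 against $A_1$. The only cosmetic difference is that the paper phrases the dichotomy as ``it suffices to prove the first bound under the assumption $v_1(A_{j^*})>\min_{e\in A_1}v_1(A_1\setminus\{e\})$'' rather than as two explicit branches.
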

\begin{proof}
Let $j^*\in\argmax_{j\in\{2,3,\dots,n\}}v_1(A_j)$ and $\bA^{(j^*)}=(A_{j^*},A_1,A_2,\dots,A_{j^*-1},A_{j^*+1},\dots,A_n)$.
It suffices to prove that $v_1(A_{j^*})\le v_1(A_1)-(1-s_2)$ under the assumption that $v_1(A_{j^*})>\min_{e\in A_1}v_1(A_1\setminus\{e\})$.

In the allocation $\bA^{(j^*)}$, each agent $j \in \{2,3, \dots,n\}$ does not get worse than $\bA$ because $1-s_j \geq 0$, and thus the EF1 criterion is still satisfied for agent $j$.
By the choice of $j^*$, we have $v_1(A_{j^*})\ge v_1(A_j)$ for all $j\in\{2,3,\dots,n\}$
Hence, the allocation $\bA^{(j^*)}$ is EF1 if $v_1(A_{j^*})> \min_{e\in A_1}v_1(A_1\setminus\{e\})$.

Since $\bA^{(j^*)}$ is an EF1 allocation, we have $\sum_{j\in[n]}v_j(A_j)\ge \sum_{j\in[n]}v_j(A^{(j^*)}_j)$ by the definition of $\bA$.
This implies that 
\begin{align}
v_1(A_1)-v_1(A_{j^*})
&\ge \sum_{j=2}^{j^*}(v_j(A_{j-1})-v_j(A_{j}))
=\sum_{j=2}^{j^*}(1-s_j+r_j)
\ge 1-s_2,
\end{align}
by \eqref{eq:si_sum}.
Consequently, we obtain that 
$v_1(A_j)\le v_1(A_{j^*})\le v_1(A_1)-(1-s_2)$ for every $j\in\{2,3,\dots,n\}$.
\end{proof}

As $\bA$ is an EF1 allocation and $1-s_2 > 0$, we choose $e^*\in A_1$ such that $v_2(A_2)\ge v_2(A_1\setminus\{e^*\})$.
Define 
\begin{align}
\bA'=(A_1\setminus\{e^*\},A_2,A_3,\dots,A_{n-1},A_n\cup\{e^*\})
\end{align}
and let $w'$ be the weights such that $w'_{i,j}=v_i(A_j')~(i,j\in[n])$.
We demonstrate that the minimum subsidy vector $\bp'$ for $w'$ satisfies the conditions that $\max_{i\in[n]}p_i'\le n-1.5$.

Before we proceed to the proof, we observe the effect of this modification to the minimum subsidy vector for the instance in \Cref{ex:n-1}.
\begin{example}
Consider the instance observed in \Cref{ex:n-1}.
Then, $\bA$ in the example is an EF1 allocation such that $\sum_{i\in[n]}v_i(A_i)=n^2\ge\sum_{i\in[n]}v_i(A_{\sigma(i)})$ for any permutation $\sigma$.
Let $e^*=e_{1,1}$ and consider $\bA'=(A_1\setminus\{e^*\},A_2,A_3,\dots,A_{n-1},A_n\cup\{e^*\})$.
Then the valuations for the bundles are 
\begin{align}
(v_i(A'_j))_{i,j\in[n]}=
\begin{pNiceMatrix}[first-row,first-col]
      &A'_1      & A'_2& \dots & A'_{n-1}& A'_n \\
1     &\tfrac{n^2}{n+1}&     &       &        & \tfrac{n}{n+1}\\
2     &n         & n   &       &        &     \\
\vdots&          & n+1 & \ddots&        &     \\
n-1   &          &     & \ddots& n      &     \\
n     &          &     &       & n+1    & n
\end{pNiceMatrix}
\end{align}
and the minimum subsidy vector for this allocation is $(0,0,1,\dots,n-2)$.
\end{example}

Let us now proceed with the proof.
To provide upper bounds with \Cref{lem:upper}, we analyze the structure of $G^{w',\id}$.
\begin{lemma}\label{lem:w'-upper}
For each $i,j\in[n]$, the weight of edge $(i,j)$ in $G^{w',\id}$ is
\begin{align}
\MoveEqLeft
w'_{i,j}-w'_{i,i}
=v_i(A_j')-v_i(A_i')
\le\begin{cases}
\max\{0.5,\,v_1(A_n')-v_1(A_1')\} & \text{if }i=1,\\
0.5                               & \text{if }i=2,\\
1                                 & \text{if }i\geq 3.
\end{cases}
\end{align}
\end{lemma}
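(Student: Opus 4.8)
The plan is to bound each edge weight $v_i(A'_j)-v_i(A'_i)$ of $G^{w',\id}$ by comparing the modified allocation $\bA'$ with the original $\bA$. Since the valuations are monotone, deleting $e^*$ from $A_1$ can only decrease a bundle's value while adding $e^*$ to $A_n$ can only increase it, and the marginal bound caps the increase; concretely $v_i(A'_1)\le v_i(A_1)$, $v_i(A_n)\le v_i(A'_n)\le v_i(A_n)+1$, and $v_i(A'_k)=v_i(A_k)$ for $2\le k\le n-1$. These reductions turn every edge weight into an expression in the original quantities $v_i(A_j)-v_i(A_i)$, which I control with the tools already developed: the EF1 property (giving $v_i(A_j)-v_i(A_i)\le 1$), \Cref{lem:what-upper} (the stronger $v_i(A_j)-v_i(A_i)\le r_i-\sum_{k=i+1}^{j}(1-s_k)$ for $i<j$), \Cref{lem:v1j-upper}, the defining property $v_2(A_1\setminus\{e^*\})\le v_2(A_2)$ of $e^*$, and the inequalities $0\le r_i\le s_i$ together with $\sum_{k=2}^{n}s_k<0.5$ from \eqref{eq:si_sum}.

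I would treat the cases in increasing difficulty. For $i\ge 3$ the target is the loosest bound $1$. When $j\ne n$ the bundle $A'_j$ is dominated by $A_j$ (or by $A_1$), so the claim is immediate from EF1. When $j=n$ the troublesome extra $+1$ from adding $e^*$ appears, so I instead invoke \Cref{lem:what-upper} to write $v_i(A_n)-v_i(A_i)\le r_i-\sum_{k=i+1}^{n}(1-s_k)$ and use $r_i\le s_i$ with $\sum_{k=2}^{n}s_k<0.5$; because $n-i\ge 1$ the negative slack $-\sum_{k=i+1}^{n}(1-s_k)$ more than cancels the $+1$. The subcase $i=n$ is separate but easy, since $A'_n\supseteq A_n$ only raises the diagonal term and every $v_n(A'_j)-v_n(A'_n)\le v_n(A_j)-v_n(A_n)\le 1$ by EF1.

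For $i=2$ the target is $0.5$: the edge to $j=1$ is non-positive by the choice of $e^*$; the edges to $3\le j\le n-1$ are at most $r_2<0.5$ via \Cref{lem:what-upper} after dropping the non-positive $-\sum(1-s_k)$; and the edge to $j=n$ again carries the $+1$, absorbed exactly as above to give $v_2(A'_n)-v_2(A_2)\le \sum_{k=2}^{n}s_k+(3-n)<0.5$ for $n\ge 3$. For $i=1$, the edge to $j=n$ is precisely the second argument of the maximum and needs no estimate, the diagonal vanishes, and for $2\le j\le n-1$ I apply \Cref{lem:v1j-upper}: whichever of its two terms attains the maximum, subtracting $v_1(A'_1)=v_1(A_1\setminus\{e^*\})$ yields at most $\max\{s_2,0\}=s_2<0.5$—using $v_1(A_1)-v_1(A_1\setminus\{e^*\})\le 1$ in the first case and $\min_{e\in A_1}v_1(A_1\setminus\{e\})\le v_1(A_1\setminus\{e^*\})$ in the second.

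The main obstacle is the $j=n$ column throughout: moving $e^*$ into $A_n$ can raise that bundle's value by as much as $1$, which naively overshoots both the $0.5$ and the $1$ targets. The resolution in every such case is that the descent from $i$ to $n$ along the path accumulates a genuinely negative length $-\sum_{k=i+1}^{n}(1-s_k)$ by \Cref{lem:what-upper}, and since the longest path exceeds $n-1.5$ the total slack $\sum_{k=2}^{n}s_k$ is forced below $0.5$; combining this with $r_i\le s_i$ is exactly what renders the $+1$ harmless. Getting this bookkeeping right—particularly the borderline case $n=3$, where only the single term $1-s_n$ is available to absorb the $+1$—is the delicate part of the argument.
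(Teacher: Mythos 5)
Your proposal is correct and follows essentially the same route as the paper's proof: the same case split on $i\in\{1\}$, $\{2\}$, $\{i\ge 3\}$ with the $j=n$ column singled out, the same use of \Cref{lem:v1j-upper} for $i=1$, the choice of $e^*$ for the edge $(2,1)$, and \Cref{lem:what-upper} with $r_i\le s_i$ and $\sum_{k=2}^n s_k<0.5$ to absorb the $+1$ from moving $e^*$ into $A_n$. The only deviation is cosmetic—for $i=2$, $2<j<n$ you bound the weight by $r_2<0.5$ where the paper keeps the term $-(1-s_3)$ to get $\le 0$, but the stated target of $0.5$ makes this immaterial.
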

\begin{proof}
Let $i,j\in[n]$.
If $i=j$, the claim clearly holds as $w'_{i,j}-w'_{i,i}=0$.
Hence, we assume that $i\ne j$.

\medskip\noindent\textbf{Case 1. } $i=1$. 
If $2\le j<n$, 
we have
\begin{align}
w'_{i,j}-w'_{i,i}
&=v_1(A_j)-v_1(A_1')\\
&\le \max\left\{v_1(A_1)-(1-s_2),\ \min_{e\in A_1}v_1(A_1\setminus\{e\})\right\}-v_1(A_1\setminus\{e^*\})\\
&\le \max\{s_2,\,0\}=s_2< 0.5
\end{align}
by \Cref{lem:v1j-upper} and \eqref{eq:si_sum}.
If $j=n$, we have
\begin{align}
w'_{i,j}-w'_{i,i}=v_1(A_n')-v_1(A_1')
&\le \max\{0.5,\,v_1(A_n')-v_1(A_1')\}.
\end{align}

\medskip\noindent\textbf{Case 2. } $i=2$. 
If $j=1$, we have $w'_{i,j}-w'_{i,i}=v_2(A_1\setminus\{e^*\})-v_2(A_2)\le 0$ by the choice of $e^*$.
If $2<j<n$, we have
\begin{align}
w'_{i,j}-w'_{i,i}
&=v_2(A_j)-v_2(A_2)\\
&\le \hat{w}_{2,j}-\hat{w}_{2,2}+r_2
\le -\sum_{k=3}^j(1-s_k)+r_2\\
&\le -(1-s_3)+s_2
= -1+s_3+s_2\le 0
\end{align}
by \Cref{lem:what-upper}, \eqref{eq:ri<si} and \eqref{eq:si_sum}.
If $j=n$, we have
\begin{align}
w'_{i,j}-w'_{i,i}
&=v_2(A_n')-v_2(A_2)
\le v_2(A_n)-v_2(A_2)+1\\
&\le \hat{w}_{2,n}-\hat{w}_{2,2}+1+r_2
\le -\sum_{k=3}^n(1-s_k)+1+r_2
\le s_2+s_3 < 0.5.
\end{align}
by \Cref{lem:what-upper}, \eqref{eq:si_sum}, and $n\ge 3$.

\medskip\noindent\textbf{Case 3. } $i\ge 3$. 
If $j<n$, we have $w'_{i,j}-w'_{i,i}=v_i(A_j')-v_i(A_i')\le v_i(A_j)-v_i(A_i)\le 1$ by $A_i'\supseteq A_i$, $A_j'=A_j$, and~\eqref{eq:edge_upper}.
If $j=n$ (and hence $i\ne n$), we have
\begin{align}
w'_{i,j}-w'_{i,i}
&=v_i(A_n')-v_i(A_i)
\le v_i(A_n)-v_i(A_i)+1\\
&\le \hat{w}_{1,n}-\hat{w}_{1,1}+1+r_i
\le -\sum_{k=i+1}^n(1-s_k)+1+r_i\\
&\le -(1-s_n)+1+s_i
= s_i+s_n<0.5.
\end{align}
by \Cref{lem:what-upper} and \eqref{eq:si_sum}. 
\end{proof}

\Cref{lem:w'-upper} implies that each edge has a small weight except $(1,n)$.
In fact, there exist another allocation that induces small edge weights when the weight of $(1,n)$ is large.
Define 
$$\bA''=(A_n',A_1',A_2',\dots,A_{n-1}').$$
Let $w''$ be the weights such that $w''_{i,j}=v_i(A_j'')~(i,j\in[n])$.
Note that the minimum subsidy vector for $w''$ is $\bp''=(p'_n,p'_1,p'_2,\dots,p'_{n-1})$.
\begin{lemma}\label{lem:w''-upper}
For each $i,j\in[n]$, the weight of edge $(i,j)$ in $G^{w'',\id}$ is
\begin{align}
\MoveEqLeft[1] w''_{i,j}-w''_{i,i}
=v_i(A_j'')-v_i(A_i'')
\le\begin{cases}
0.5+\max\{0,\,v_1(A_1')-v_1(A_n')\} & \text{if }i=1,\\
s_2+s_3                             & \text{if }i= 2,\\
s_i                                 & \text{if }i\ge 3.
\end{cases}
\end{align}
\end{lemma}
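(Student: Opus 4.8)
The plan is to fix $i$ and bound the envy $v_i(A_j'')-v_i(A_i'')$ toward each position $j$ separately, after recording which original bundle occupies each slot of $\bA''$. By construction $A_1''=A_n'=A_n\cup\{e^*\}$, $A_2''=A_1'=A_1\setminus\{e^*\}$, and $A_i''=A_{i-1}'=A_{i-1}$ for every $i\ge 3$. The decisive observation is that an agent $i\ge 3$ now holds exactly the bundle $A_{i-1}$ that she envied along the longest path, and its value is pinned down: from $s_i=1+r_i+v_i(A_i)-v_i(A_{i-1})$ we get $v_i(A_{i-1})=v_i(A_i)+r_i+(1-s_i)$. So for $i\ge 3$ every edge weight equals $v_i(A_\ell)-\bigl(v_i(A_i)+r_i+(1-s_i)\bigr)$, where $A_\ell$ is the bundle in slot $j$.

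I would dispatch the ``downhill'' directions first. If slot $j$ holds $A_\ell$ with $\ell>i$, \Cref{lem:what-upper} gives $v_i(A_\ell)\le v_i(A_i)+r_i-\sum_{k=i+1}^{\ell}(1-s_k)\le v_i(A_i)+r_i$, so the edge weight is at most $s_i-1\le 0$; the case $\ell=i$ is analogous. Slot $j=1$ holds $A_n'$, and combining $v_i(A_n')\le v_i(A_n)+1$ (single extra item $e^*$) with \Cref{lem:what-upper} gives at most $s_i-\sum_{k=i+1}^n(1-s_k)\le s_i$. The boundary $i=n$ is handled directly: $A_n''=A_{n-1}$ carries the same pinned value, and $v_n(A_n')\le v_n(A_n)+1$ yields $s_n-r_n\le s_n$.

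The step I expect to be the real obstacle is the envy of an agent $i\ge 3$ toward a \emph{lower}-indexed bundle $A_\ell$ with $\ell<i-1$ (the slots $2\le j<i$). Here the longest-path / no-positive-cycle bound only yields $v_i(A_\ell)-v_i(A_{i-1})\le p_{i-1}^*-p_\ell^*=\sum_{k=\ell+1}^{i-1}(1-s_k)$, which far exceeds $s_i$. The resolution is to abandon the path bound and invoke EF1 directly: since each item has marginal at most $1$, $v_i(A_\ell)\le v_i(A_i)+1$ (cf.\ \eqref{eq:edge_upper}), and with the pinned value this gives $v_i(A_\ell)-v_i(A_{i-1})\le 1-(r_i+1-s_i)=s_i-r_i\le s_i$. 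The very same inequality also covers slot $j=2$, which holds $A_1'\subseteq A_1$, completing the bound $s_i$ for all $i\ge 3$.

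For $i=2$ the agent holds $A_2''=A_1'$, whose value is squeezed by monotonicity and the choice of $e^*$ into $v_2(A_2)+r_2-s_2\le v_2(A_1')\le v_2(A_2)$ (using $v_2(A_1)=1+r_2+v_2(A_2)-s_2$); against any $A_\ell$ with $\ell\ge 2$ this gives at most $s_2$, while the worst slot $j=1$ (bundle $A_n'$) combines \Cref{lem:what-upper}, the extra-item bound, and $\sum_{k=3}^n(1-s_k)\ge 1-s_3$ to give $s_2+s_3$. For $i=1$ the agent holds $A_1''=A_n'$, and I would use \Cref{lem:v1j-upper}: since $\min_{e\in A_1}v_1(A_1\setminus\{e\})\le v_1(A_1')$ and $v_1(A_1)-(1-s_2)\le v_1(A_1')+s_2$, every other bundle satisfies $v_1(A_\ell)\le v_1(A_1')+s_2$, whence $v_1(A_\ell)-v_1(A_n')\le s_2+\bigl(v_1(A_1')-v_1(A_n')\bigr)\le 0.5+\max\{0,\,v_1(A_1')-v_1(A_n')\}$ because $s_2<0.5$ by \eqref{eq:si_sum}; the slot holding $A_1'$ is immediate.
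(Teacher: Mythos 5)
Your proof is correct and follows essentially the same route as the paper: the identical case split on $i\in\{1\},\{2\},\{\ge 3\}$, with \Cref{lem:v1j-upper} for $i=1$, the pinned value $v_i(A_{i-1})=v_i(A_i)+r_i+(1-s_i)$ together with \Cref{lem:what-upper} for downhill slots, and the EF1 edge bound \eqref{eq:edge_upper} for uphill slots. If anything, you are slightly more careful than the paper's own write-up, since you retain the $r_i$ terms throughout and treat the boundary $i=n$, $j=1$ explicitly rather than letting it ride on an empty-sum step.
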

\begin{proof}
Let $i,j\in[n]$. The proof is clear when $i=j$, and thus we assume that $i\neq j$.

\medskip\noindent\textbf{Case 1. } $i=1$. 
If $j=2$, we have 
\begin{align}
w''_{i,j}-w''_{i,i}=v_1(A_1')-v_1(A_n').
\end{align}
If $j>2$, we have
\begin{align}
%\MoveEqLeft[1]
w''_{i,j}-w''_{i,i}
&\le v_1(A_{j-1})-v_1(A_n')\\
&\le \max\left\{
v_1(A_1)-(1-s_2),\ 
\min_{e\in A_1}v_1(A_1\setminus\{e\})
\right\}-v_1(A_n')\\
&\le \max\{v_1(A_1')+s_2,\,v_1(A_1')\}-v_1(A_n')\\
&\le 0.5+ v_1(A_1')-v_1(A_n')
\end{align}
by \Cref{lem:v1j-upper} and \eqref{eq:si_sum}.

\medskip\noindent\textbf{Case 2. } $i=2$. 
In this case, since $r_2 \geq 0$, we have
\begin{align}
w''_{i,i}
=v_2(A_2'')
=v_2(A_1')
\ge v_2(A_1)-1
\geq v_2(A_2)-s_2.
\end{align}
If $j=1$, since $A''_1=A'_n=A_n\cup \{e^*\}$, we have
\begin{align}
%\MoveEqLeft
w''_{i,j}-w''_{i,i}
&\le v_2(A_1'')-v_2(A_2)+s_2
= v_2(A_n')-v_2(A_2)+s_2\\
&\le 1+v_2(A_n)-v_2(A_2)+s_2\\
&\le -\sum_{k=3}^n(1-s_k)+1+s_2 \qquad (\because \text{\Cref{lem:what-upper}})\\
&\le -(1-s_3)+1+s_2=s_2+s_3.
\end{align}
If $j\ge 3$, we have 
\begin{align}
%\MoveEqLeft
w''_{i,j}-w''_{i,i}
&\le v_2(A_{j-1}')-v_2(A_2)+s_2
= v_2(A_{j-1})-v_2(A_2)+s_2
\le -\sum_{k=3}^{j-1}(1-s_k)+s_2\le s_2.
\end{align}

\medskip\noindent\textbf{Case 3. } $i\ge 3$.
Recall that
$w''_{i,i}=v_i(A_i'')=v_i(A_{i-1})=v_i(A_i)+(1-s_i)$.

If $j=1$, we have
\begin{align}
%\MoveEqLeft
w''_{i,j}-w''_{i,i}
&= v_i(A_1'')-v_i(A_i'')
= v_i(A_n\cup\{e^*\})-v_i(A_i)-(1-s_i)\\
&\le 1+v_i(A_n)-v_i(A_i)-(1-s_i)
= v_i(A_n)-v_i(A_i)+s_i\\
&\le -\sum_{k=i+1}^n (1-s_k)+r_i+s_i\\
&\le -1 + s_n +2s_i\le 0,
\end{align}
where the second inequality holds by \Cref{lem:what-upper} and the last inequality holds by \eqref{eq:si_sum}.
If $j\ge 2$, we have
\begin{align}
%\MoveEqLeft
w''_{i,j}-w''_{i,i}
&= v_i(A_j'')-v_i(A_i'')
= v_i(A_{j-1}')-v_i(A_{i-1}')\\
&\le v_i(A_{j-1})-v_i(A_{i-1})
= v_i(A_{j-1})-v_i(A_i)-(1-s_i)\\
&\le 1-(1-s_i)=s_i,
\end{align}
where the last inequality holds by \eqref{eq:edge_upper}. 
\end{proof}

Now, we are ready to prove \Cref{thm:improved}.
\begin{proof}[Proof of \Cref{thm:improved}]
In what follows, suppose that $n\ge 3$.
If $\max_{i\in[n]}p_i^*\le n-1.5$, the total subsidy $\sum_{i\in[n]}p_i^*$ is at most $\sum_{k=1}^{n-1}\min\{k,\,n-1.5\}=(n^2-n-1)/2$ by \Cref{lem:upper}.
Therefore, we assume that $\max_{i\in[n]}p_i^*>n-1.5$.
We show that in this case, $\bp'$ and $\bp''$ defined before are desired ones.

For $\bp'$, by \Cref{lem:w'-upper,lem:upper}, we have
\begin{align}
\max_{i\in[n]}p_i'
&\le n-2+\max\{0.5,v_1(A_n')-v_1(A_1')\}\\
&\le (n-1.5)+\max\{0,v_1(A_n')-v_1(A_1')\}. \label{eq:max-p'}
\end{align}
and
\begin{align}
\sum_{i\in[n]}p_i'
&\le \sum_{i=1}^{n-2}(n-i)\cdot 1+1\cdot\max\{0.5,v_1(A_n')-v_1(A_1')\}\\
&\le \frac{n^2-n-1}{2}+\max\{0,v_1(A_n')-v_1(A_1')\}. \label{eq:sum-p'}
\end{align}
For $\bp''$, by \Cref{lem:w''-upper,lem:upper}, it holds that
\begin{align}
\max_{i\in[n]}p_i''
&\le s_3+\sum_{i=2}^n s_i+0.5+\max\{0,v_1(A_1')-v_1(A_n')\}\\
&\le 1.5+\max\{0,v_1(A_1')-v_1(A_n')\}\\
&\le (n-1.5)+\max\{0,v_1(A_1')-v_1(A_n')\}. \label{eq:max-p''}
\end{align}
and 
\begin{align}
\sum_{i\in[n]}p_i''
&\le \sum_{i=1}^{n-2}(n-i)\cdot \frac{1}{2}+\frac{1}{2}+\max\{0,v_1(A_n')-v_1(A_1')\}\\
&\le \frac{n^2-n-1}{2}+\max\{0,v_1(A_1')-v_1(A_n')\}. \label{eq:sum-p''}
\end{align}

We observe from these bounds that if $v_1(A_n')\le v_1(A_1')$, then $\bp'$ satisfies the requirements of this theorem, i.e., $\max_{i\in[n]}p_i'\le n-1.5$ by \eqref{eq:max-p'} and $\sum_{i\in[n]}p_i'\le (n^2-n-1)/2$ by \eqref{eq:sum-p'}; 
otherwise, $\bp''$ satisfies the requirements by \eqref{eq:max-p''} and \eqref{eq:sum-p''}.
However, recall that $\bA''$ and $\bp''$ are rearrangements of $\bA'$ and $\bp'$, respectively.
Thus, both $\bp'$ and $\bp''$ satisfy the requirements.
Hence, 
$(\hat{\bA}, \hat{\bp})=((A'_{\tau^*(1)},\dots,A'_{\tau^*(n)}),\,(p'_{\tau^*(1)},\dots,p'_{\tau^*(n)}))$ is a desired envy-free allocation with a subsidy, where $\tau^*$ is a maximum weight permutation for $w'$.
In addition, $(\hat{\bA}, \hat{\bp})$ can be computed in polynomial time via computing $\tau^*$.
\end{proof}

\section*{Acknowledgments}
This work was partially supported by the joint project of Kyoto University and Toyota Motor Corporation, titled ``Advanced Mathematical Science for Mobility Society'', 
JST PRESTO Grant Number
JPMJPR2122 %YK
and 
JSPS KAKENHI Grant Numbers 
JP17K12646, %HS
JP19K22841, %KM
JP20H00609, %KM
JP20H05967, %KM
JP20K19739, %YK
JP20H00609, %MY
JP21H04979, %MY, AT
JP21K17708, %HS
and JP21H03397. %HS

\bibliography{abb,fair}

\end{document}